\title{Geometric derivation of the finite $N$ master loop equation}
\author{Omar Abdelghani, Ron Nissim}
\date{} 
\renewcommand{\div}{\text{ div}}
\renewcommand{\grad}{\nabla}
\newtheorem{theorem}{Theorem}[section]
\newtheorem{definition}{Definition}[section]
\newtheorem{lemma}{Lemma}[section]
\newtheorem{remark}{Remark}[section]
\newcommand{\Addresses}{{
  \bigskip
  \footnotesize

  Omar Abdelghani, \textsc{Department of Mathematics, Massachusetts Institute of Technology, Cambridge, Massachusetts 02139, USA.}\par\nopagebreak
  \textit{E-mail address}: \texttt{oabdel@mit.edu}
  
  \medskip
  
  Ron Nissim, \textsc{Department of Mathematics, Massachusetts Institute of Technology, Cambridge, Massachusetts 02139, USA.}\par\nopagebreak
  \textit{E-mail address}: \texttt{rnissim@mit.edu} 

}}
\begin{document}
\begin{abstract}

In this paper we provide a geometric derivation of the master loop equation for the lattice Yang-Mills model with structure group $G \in \{SO(N),SU(N), U(N)\}$. This approach is based on integration by parts on $G$. In the appendix we compare our approach to that of \cite{Ch19a} and \cite{J16} based on Schwinger-Dyson equations, and  \cite{SheSmZh22} based on stochastic analysis. In particular these approaches are all easily seen to be equivalent. The novelty in our approach is the use of intrinsic geometry of $G$ which we believe simplifies the derivation.
\end{abstract}
\maketitle
\tableofcontents 
\section{Introduction}

The study of a Quantum Yang-Mills theory is of great importance for understanding the standard model of particle physics. Although the physically relevant theory takes place in Minkowski space, standard arguments indicate that it is sufficient to study the Euclidean/probabilistic theory. This paper will only deal with the Euclidean theory, for a general survey paper see \cite{Ch18}.

Specifically, we will be dealing with the the lattice Yang-Mills model with structure group $G \in \{SO(N),SU(N), U(N)\}$. One approach to 'solving' the lattice Yang-Mills model is through computing Wilson loop expectations.  Wilson loop expectations are known to satisfy an equation known as a master loop equation.

Master loop equations for lattice gauge theories were first published by Makeenko and Migdal \cite{MaMi79} in 1979. However this derivation assumed a 'factorization property' without proof. In recent years there have been several rigorous derivations of master loop equations for classical groups such as $SO(N),SU(N),U(N)$. For instance see \cite{Ch19a}, \cite{J16}, \cite{SheSmZh22}, \cite{PaPfShYu23}, and \cite{CaPaSh23}.

The purpose of this paper is to provide a geometric derivation of the master loop equation for the lattice Yang-Mills model with structure group $G \in \{SO(N),SU(N), U(N)\}$. This approach is based on integration by parts on $G$, and is inspired by the derivations in \cite{Ch19a} and \cite{J16}. In particular, the approaches of \cite{Ch19a} and \cite{J16} rely on Schwinger-Dyson equations which they obtain from Stein's idea of exchangeable pairs. As remarked in a talk of Chatterjee \cite{Ch19b} (pointed out by Thierry Lévy), these Schwinger-Dyson equations are simply integration by parts on $G$ written in extrinsic coordinates. We show the equivalence of integration by parts and the Schwinger-Dyson equation in Appendix \ref{IBP section} and Appendix \ref{extrinsic derivation}, and generalize this to any compact Lie group with Riemannian structure inducing the Haar measure. The main sections of the paper are devoted to deriving the master loop equation for $G \in \{SO(N),SU(N), U(N)\}$ directly from integration by parts, relying on the intrinsic geometry of $G$ in order to simplify the calculations of Chatterjee and Jafarov. Lastly in Appendix \ref{Langevin section} we compare our approach to the stochastic analyis approach in \cite{SheSmZh22}.

\section*{Acknowledgements}
We thank Scott Sheffield and Sky Cao for many useful discussions.

\section{Definitions and notation}
\subsection{Lattice gauge theory}
In the following, $\Lambda$ is a finite two dimensional cell complex. 
\begin{definition} 
  A path $\gamma$ is a sequence $e_1,\dots ,e_n$ of oriented $1$-cells such that their union is connected in $\Lambda$.
  The set of all paths forms a groupoid $\mathscr{P}(\Lambda)$, where two paths $\gamma_1$ and $\gamma_2$ may be concatenated if 
  $$\partial_0 \gamma_2 = \partial_1\gamma_1(1)$$
\end{definition}
The idea of a lattice gauge theory is to discretize a notion of a connection on a principal $G$-bundle. By restricting paths to lie in a discrete set of $1$-cells, we may identify a connection with its finite set of parallel transport maps. This leads to the following: 
\begin{definition}
  Let $\Lambda$ be as before. A $G$-connection or $G$-gauge field is a homomorphism $$Q:\mathscr{P}(\Lambda)\to G$$.
  In other words, if $\gamma$ is a path from $a$ to $b$, and $\psi$ is a path from $b$ to $c$, then 
  $$Q(\psi \ast \gamma) = Q(\psi)Q(\gamma)$$
  And $$Q(\psi^{-1}) = Q^{-1}(\psi)$$

  It's clear from the definition that such a homomorphism is determined by its values on an oriented edge. Pick an arbitrary orientation for each edge. Let the set of such edges be denoted $E^+_{\Lambda}$. Then we have 
  $$\text{Hom}(\mathscr{P}(\Lambda), G)\cong G^{E^+_\Lambda}$$
  This characterization will be useful in defining lattice Yang-Mills measures.
\end{definition}
\begin{definition}
A path $\ell\in \mathscr{P}(\Lambda)$ is called a loop if its image has empty boundary.
\end{definition}
The most important observables in a lattice gauge theory are the Wilson loops. 
\begin{definition}
Let $\chi$ be an irreducible character on $G$, and let $\ell\in\mathcal{P}(\Lambda)$ be a loop. A Wilson loop is a functional of the form
$$W^\chi_\ell = \chi(Q(\ell))$$
\end{definition}

\begin{remark}
    For the rest of the paper we will only consider Wilson loops defined with the character $\chi(Q)=\mathrm{Tr}(Q)$ for $G=SO(N)$, and $\chi(Q)=\mathrm{Re}\mathrm{Tr}(Q)$ for $G\in\{SU(N),U(N)\}$. The proofs do not immediately generalize to other irreducible characters.
\end{remark}
\begin{definition}
Let $f$ be any class function (for our purposes usually an irreducible character). A plaquette is a $2$-cell in $\Lambda$. Let the set of faces be denoted $\mathcal{P}(\Lambda)$. Pick an arbitrary orientation, and let $\mathcal{P}^+_\Lambda$ denote the set of positively oriented plaquettes.  The lattice Yang-Mills measure with 't Hooft coupling is the probability measure 
$$Z_{\beta, \Lambda, N}^{-1}\exp(\beta N \sum_{p\in \mathcal{P}^+_\Lambda} f(Q(\partial p)))\prod_{e\in E_\Lambda}dg_e$$
Where $dg$ is the Haar measure.
\end{definition}

\subsection{Path operations}
The following is a list of operations on loops that will be relevant in defining the master loop equations. The terminology and notation is based on that of \cite{Ch19a}. The master loop equation is based around integration by parts on an edge $e$ in $\Lambda$. Thus, for every path, We define a set $C_\ell$ indexing the occurrences of $e^{\pm}$ in $\ell$. Moreover, if $x\in C_\ell$, then $\omega_x\in \qty{-1,1}$ is the orientation of that instance of $e$.
\begin{definition}
Let $\ell= a_1 e^{\omega_1}\dots a_n e^{\omega_n} a_{n+1}$ be a loop. We define 
$$\ell\setminus e_x = a_{x+1}\dots e^{\omega_n} a_{n+1}\dots a_{x}$$
In other words, this is the string formed by excising $e_x$, ordered starting from the edge after $e_x$.
\end{definition}
\begin{definition}[Positive merger]
Let $\ell_1$ and $\ell_2$ be loops. 
$$\ell_1\oplus_{x,y}\ell_2 = (\ell_1\setminus e_x) e^{\omega_x} (\ell_2\setminus e_y e)^{\omega_x \omega_y}$$
\end{definition}
\begin{definition}[Negative merger]
Let $\ell_1$ and $\ell_2$ be loops. 
$$\ell_1\oplus_{x,y}\ell_2 = (\ell_1\setminus e_x)(\ell_2\setminus e_y)^{-\omega_x \omega_y}$$
\end{definition}
\begin{definition}[Positive split]
Let $\ell = a_1 e^{\omega_1}\dots a_n e^{\omega_n}a_{n+1}$ be a loop such that $\abs{C_\ell}>1$. If $x\neq y\in C_\ell$ and $\omega_x \omega_y = 1$, the positive split at $x,y$ is the pair of loops 
\begin{align*}
\times^1_{x,y}\ell &= a_{x+1}e^{\omega_{x+1}}\dots a_{y-1}e^{\omega_{y-1}}e^{\omega_y}\\
\times^2_{x,y}\ell &= a_{y+1}e^{\omega_{y+1}}\dots a_{x-1}e^{\omega_{x-1}}e^{\omega_x}
\end{align*}

\end{definition}
\begin{definition}[Negative split]
Let $\ell = a_1 e^{\omega_1}\dots a_n e^{\omega_n}a_{n+1}$ be a loop such that $\abs{C_\ell}>1$. If $x\neq y\in C_\ell$ and $\omega_x \omega_y = -1$, the positive split at $x,y$ is the pair of loops 
\begin{align*}
\times^1_{x,y}\ell &= a_{x+1}e^{\omega_{x+1}}\dots a_{y-1}e^{\omega_{y-1}}\\
\times^2_{x,y}\ell &= a_{y+1}e^{\omega_{y+1}}\dots a_{x-1}e^{\omega_{x-1}}
\end{align*}
\end{definition}
\begin{definition}[Positive twist]
Let $\ell$ be a loop such that $\abs{C_\ell}>1$. WLOG suppose $e_x\in \times^1_{x,y}\ell$ and $e_y\in \times^2_{x,y}$. If $x\neq y\in C_\ell$ and $\omega_x\omega_y = -1$, the positive twist of $\ell$ at $x,y$ is the loop 
$$\propto_{x,y}\ell =\qty(\times^1_{x,y}\ell)e^{\omega_x} \qty(\times^2_{x,y}\ell)^{-1}e^{\omega_y} $$
\end{definition}
\begin{definition}[Negative twist]
Let $\ell$ be a loop such that $\abs{C_\ell}>1$. Let $x\neq y\in C_\ell$ and $\omega_x\omega_y = 1$.  WLOG let $e_x\in \times^1)_{x,y}\ell$. Then the negative twist is 
$$\propto_{x,y} \ell = \times^1_{x,y} \ell \ominus_{x,y} \qty(\times^2_{x,y}\ell)^{-1}$$
\end{definition}

\subsection{$SO(N)$ definitions and conventions}
Recall the following: $$SO(N) = \qty{g\in GL_N(\mathbb{R})| g^T g = I}$$

The tangent space at $g$ is the space of matrices $$T_g SO(N) = g \mathfrak{so}(N) = \qty{M\in \mathbb{R}^{N^2}| gX +X^T g^{-1} = 0}$$

There is a natural bi-invariant metric on $SO(N)$:
\begin{definition}
Let $X,Y\in T_g SO(N)$. Then $$\ev{X,Y} = \frac{1}{2}\Tr(X^TY)$$
(Note that this is the restriction of half the Euclidean metric to $SO(N)$)
\end{definition}
With respect to this metric, 
$$X_{ij} = g(e_i e_j^T - e_j e_i^T)$$
is an orthonormal frame on $TG$. 
\newline 
For $SO(N)$ lattice gauge theory, we use the lattice Yang-Mills measure defined by the character $f(Q) = \Tr Q$.
\subsection{$SU(N)$ and $U(N)$ definitions and conventions}
Recall that $$SU(N) = \qty{g\in GL_N(\mathbb{C})| g^\dagger g = I, \det g = 1}$$
$$U(N) = \qty{g\in GL_N(\mathbb{C})| g^\dagger g = I}$$

The tangent space at $g$ is the space of matrices 
$$T_g SU(N) = g \mathfrak{su}(N) = \qty{M\in \mathbb{R}^{N^2}| gX +X^\dagger g^{-1}=0,\Tr(g^{-1}X)=0}$$
$$T_g U(N) = g \mathfrak{su}(N) = \qty{M\in \mathbb{R}^{N^2}| gX +X^\dagger g^{-1}=0}$$

There is a natural bi-invariant metric on $SU(N)$:
\begin{definition}
Let $X,Y\in T_g SU(N)$ or $T_gU(N)$. Then $$\ev{X,Y} = \frac{1}{2}\Re\Tr(X^\dagger Y)$$
\end{definition}
(Note that this is the restriction of half the Euclidean metric to $SU(N)$)
\newline 
For $SU(N)$ lattice gauge theory, we use the lattice Yang-mills measure defined by the class function $f(Q)  = \Re \Tr Q$
\newline
The $SU(N)$ case is complicated in that the character associated with the defining representation is now complex valued. Thus we need to make sense of gradients of functions $f\in C^\infty(SU(N), \mathbb{C})$
\begin{definition}
Let $f\in C^\infty(SU(N), \mathbb{C})$. Define $\grad f\in \mathfrak{X}(SU(N))\otimes \mathbb{C}$ by 
$$\grad f = \grad \Re f + i \grad \Im f$$
In other words, we extend the gradient in the natural way to a complex-linear map of smooth functions.
\end{definition}
Similarly, we extend the metric. 
\begin{definition}
Let $X, Y\in \mathfrak{X}(SU(N))\otimes \mathbb{C}$. Then 
$$\ev{X,Y} = \ev{\Re X, \Re Y} - \ev{\Im X, \Im Y} + i\qty(\ev{\Re X, \Im Y} + \ev{\Im X, \Re Y})$$
\end{definition}
With this definition in mind, we can formulate the basis of the proof of the main theorem: 
\begin{lemma}[Laplacian integration by parts]
Let $f, g\in C^\infty(G;\mathbb{C})$, with $G$ a compact Lie group. Equip $G$ with a bi-invariant metric. With respect to this metric, 
$$\int_G g\Delta f = -\int_G \ev{\grad f, \grad g}$$
\end{lemma}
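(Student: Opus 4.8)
The plan is to reduce the complex-valued statement to the real case by linearity, and then prove the real case via Green's formula (the divergence theorem on a closed Riemannian manifold). First I would observe that both sides of the identity are complex-bilinear in the pair $(f,g)$ in a way compatible with the definitions of $\grad$ and $\ev{\cdot,\cdot}$ given above: writing $f = f_1 + i f_2$ and $g = g_1 + i g_2$ with $f_j, g_j$ real-valued, $\Delta f = \Delta f_1 + i \Delta f_2$, and one checks that $\int_G g \Delta f$ and $-\int_G \ev{\grad f, \grad g}$ each expand into the same $\mathbb{C}$-linear combination of the four real integrals $\int_G g_a \Delta f_b$. Hence it suffices to prove $\int_G g\, \Delta f = -\int_G \ev{\grad f, \grad g}$ when $f, g \in C^\infty(G;\mathbb{R})$.

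For the real case, I would recall that on a Riemannian manifold the Laplace--Beltrami operator satisfies $\Delta f = \div(\grad f)$, and that for a smooth function $g$ and smooth vector field $X$ one has the product rule $\div(gX) = g\,\div X + \ev{\grad g, X}$. Applying this with $X = \grad f$ gives $\div(g\,\grad f) = g\,\Delta f + \ev{\grad g, \grad f}$. Since $G$ is compact without boundary, the divergence theorem gives $\int_G \div(g\,\grad f) = 0$, where the integral is taken against the Riemannian volume form; because the metric is bi-invariant, this volume form is (a constant multiple of) the Haar measure, so this is the same integral appearing in the statement. Rearranging yields $\int_G g\,\Delta f = -\int_G \ev{\grad f, \grad g}$, using symmetry of the metric.

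The only genuine subtlety — and the step I would be most careful about — is the bookkeeping in the complex reduction: one must verify that the \emph{non-Hermitian} extension of the metric chosen in the Definition above (note $\ev{X,Y} = \ev{\Re X,\Re Y} - \ev{\Im X,\Im Y} + i(\cdots)$, which is complex-bilinear rather than sesquilinear) is exactly the one that makes $\ev{\grad f,\grad g}$ pair up correctly with $g\,\Delta f$ rather than $\bar g\,\Delta f$. Concretely, I would write out $\ev{\grad f, \grad g} = \ev{\grad f_1 + i\grad f_2,\ \grad g_1 + i\grad g_2}$ using complex-bilinearity and match it term-by-term against the expansion of $g\,\Delta f = (g_1 + ig_2)(\Delta f_1 + i\Delta f_2)$ after applying the real identity to each of the four real pairs. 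Everything else is standard Riemannian geometry that I would invoke rather than reprove; it is worth a one-line remark that the argument uses nothing about $G$ being one of $SO(N), SU(N), U(N)$ beyond compactness and bi-invariance, which is why the lemma is stated for general compact $G$.
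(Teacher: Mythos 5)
Your proof is correct, but it takes a different route from the one the paper actually records. The paper states this lemma without proof in the main text and instead, in Appendix~\ref{IBP section}, derives integration by parts from Stein's method of exchangeable pairs: both $\int_G g\,\Delta f$ and $\int_G \ev{\grad f,\grad g}$ are rewritten as $\epsilon\downarrow 0$ limits of second-order difference quotients over the exchangeable pair $(O, R_{I,J}(\eta\epsilon)O)$ (and, for a general compact $G$, over $(g, g_{u([d])}(\eta\epsilon)g)$), after which the elementary identity $\mathbb{E}[(f(U')-f(U))g(U)]=-\tfrac12\mathbb{E}[(f(U')-f(U))(g(U')-g(U))]$ yields the result. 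Your argument --- $\Delta f=\div(\grad f)$, the product rule $\div(g\grad f)=g\,\Delta f+\ev{\grad g,\grad f}$, and the divergence theorem on a closed manifold, with the observation that a bi-invariant metric has bi-invariant (hence Haar) volume --- is the shorter, standard Riemannian-geometric proof, and your reduction of the complex-valued case to four real cases via the complex-bilinear (non-Hermitian) extension of the metric is exactly the bookkeeping the main text needs but leaves implicit; it is worth having written out, since a sesquilinear convention would instead pair $\Delta f$ with $\bar g$. What the paper's longer derivation buys is not the lemma itself but the explicit identification of Chatterjee's Schwinger--Dyson equation (obtained from the same exchangeable pair) with integration by parts, which is the point of that appendix; as a proof of the lemma alone, your approach is cleaner and equally general.
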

In the sequel we will show that when $f$ and $g$ are Wilson loop correlation functions, laplacian integration by parts directly reduces to the master loop equation.
\section{Main Results}
The first main result is the master loop equation for $SO(N)$
\begin{theorem}[$SO(N)$ master loop equation]\label{thm1}
  Let $(\ell_1\dots \ell_n)$ be a sequence of loops. Let $e$ be an edge that lies in at least one of $\ell_i$. Let $\mathbb{E}$ denote expectations with respect to the $SO(N)$ lattice yang mills measure. Let $\mathcal{P}^+(e)$ denote the set of positively oriented plaquettes containing $e$. Finally, let $A_i$ be the set of occurrences of the edge $e\in E_\Lambda^+$ in $\ell_i$, $B_i$ the set of occurrences of $e^{-1}$, and $C_i = A_i \cup B_i$. Then 
 \begin{align}
  \begin{split}
  &(N-1)m \mathbb{E}[W_{\ell_1}\dots W_{\ell_n}] = \sum_{x\neq y\in C_1, \omega_x\omega_y = 1}\mathbb{E}[W_{\propto_{x,y}\ell_1}W_{\ell_2}\dots W_{\ell_n}] \\
  &- \sum_{x\neq y\in C_1, \omega_x \omega_y = -1} \mathbb{E}[W_{\propto_{x,y}\ell_1}W_{\ell_2}\dots W_{\ell_n}] +\sum_{x,y\in C_1, \omega_x\omega_y = -1}\mathbb{E}[W_{\times^1_{x,y}\ell_1}W_{\times^2_{x,y}\ell_1}W_{\ell_2}\dots W_{\ell_n}]\\
  &- \sum_{x\neq y\in C_1, \omega_x \omega_y = 1}\mathbb{E}[W_{\times^1_{x,y}\ell_1}W_{\times^2_{x,y} \ell_1}W_{\ell_2}\dots W_{\ell_n}] +\sum_{i=2}^n \sum_{x\in C_1, y\in C_i} \mathbb{E}\qty[W_{\ell_1\ominus_{x,y}\ell_i}\prod_{j\neq 1,i}W_{\ell_j}]\\
  &- \sum_{i=2}^n \sum_{x\in C_1, y\in C_i} \mathbb{E}\qty[W_{\ell_1\oplus_{x,y}\ell_i} \prod_{j\neq i, 1 } W_{\ell_j}] +N\beta \sum_{p\in\mathcal{P}^+(e)} \sum_{x\in C_1}\mathbb{E}[W_{\ell_1\ominus_x p} W_{\ell_2}\dots W_{\ell_n}]\\
  &- N\beta \sum_{p\in\mathcal{P}^+(e)}\sum_{x\in C_1}\mathbb{E}[W_{\ell_1\oplus_x p}W_{\ell_2}\dots W_{\ell_n}]
  \end{split}
 \end{align}
\end{theorem}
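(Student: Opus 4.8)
The plan is to apply the Laplacian integration by parts lemma on the single edge variable $g_e$, with all other edge variables frozen. Since the bi-invariant metric on $G^{E_\Lambda^+}$ is the product of the metrics on the individual factors, the Laplacian decomposes as $\Delta = \sum_{e'}\Delta_{e'}$ and the lemma applies to $\Delta_e$ by itself. I would take $f = W_{\ell_1}$, regarded as a smooth function of $g_e$, and $g = W_{\ell_2}\cdots W_{\ell_n}\exp\!\big(\beta N\sum_{p\in\mathcal{P}^+_\Lambda}\Tr Q(\partial p)\big)$, i.e. the product of the remaining Wilson loops with the unnormalized density. Then
\[
\int_{G^{E_\Lambda^+}} g\,\Delta_e f \;=\; -\int_{G^{E_\Lambda^+}} \ev{\grad_e f,\grad_e g},
\]
and dividing through by $Z_{\beta,\Lambda,N}$ turns both sides into expectations, so it remains to identify each side with the terms of the master loop equation.

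For the left-hand side I would compute $\Delta_e W_{\ell_1}$ in the orthonormal frame $X_{ij} = g_e(e_ie_j^T - e_je_i^T)$, $i<j$, for which $\Delta_e = \sum_{i<j}X_{ij}^2$ (the first-order part drops out by bi-invariance, since $\nabla_{X_{ij}}X_{ij} = \tfrac12[X_{ij},X_{ij}] = 0$). Writing $\ell_1$ as an alternating word in the $g_e^{\pm 1}$ and the frozen parallel transports, each application of $X_{ij}$ inserts $\pm(e_ie_j^T - e_je_i^T)$ at one occurrence $x\in C_1$ (sign $+$ for $e$, sign $-$ for $e^{-1} = g_e^T$). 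Applying $X_{ij}$ twice and summing over $i<j$ splits into: the \emph{diagonal} part, where both insertions sit at the same occurrence, which uses $\sum_{i<j}(e_ie_j^T-e_je_i^T)^2 = -(N-1)I$ and contributes $-(N-1)\abs{C_1}\,W_{\ell_1} = -(N-1)m\,W_{\ell_1}$; and the \emph{off-diagonal} part, with insertions at distinct $x\neq y$, evaluated with the completeness relation $\sum_{i<j}(e_ie_j^T-e_je_i^T)_{ab}(e_ie_j^T-e_je_i^T)_{cd} = \delta_{ac}\delta_{bd}-\delta_{ad}\delta_{bc}$. The $\delta_{ac}\delta_{bd}$ and $\delta_{ad}\delta_{bc}$ pieces cut $\ell_1$ open at $x$ and $y$ and reglue the two arcs in the two possible ways, producing either a single loop (a positive or negative twist $\propto_{x,y}\ell_1$) or a pair of loops (a positive or negative split $\times^1_{x,y}\ell_1,\times^2_{x,y}\ell_1$); which of the four occurs is dictated by $\omega_x\omega_y=\pm1$ together with the relative sign of the two $\delta$-pieces. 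Matching these words to the definitions in Section 2 reproduces exactly the first four sums of the theorem, along with the term $-(N-1)m\,\mathbb{E}[W_{\ell_1}\cdots W_{\ell_n}]$ which gets moved to the other side.

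For the right-hand side I would expand $\grad_e g$ by the Leibniz rule: $\grad_e g$ equals $\sum_{i=2}^n (\grad_e W_{\ell_i})\prod_{j\neq 1,i}W_{\ell_j}\cdot(\text{density})$ plus $W_{\ell_2}\cdots W_{\ell_n}\cdot\beta N\big(\sum_{p\in\mathcal{P}^+(e)}\grad_e\Tr Q(\partial p)\big)\cdot(\text{density})$, using that only plaquettes containing $e$ depend on $g_e$. Each loop--loop pairing $\ev{\grad_e W_{\ell_1},\grad_e W_{\ell_i}}$, expanded as a double sum over occurrences $x\in C_1$ and $y\in C_i$, is evaluated with the same completeness relation; now the two $\delta$-pieces splice $\ell_1$ and $\ell_i$ into a single loop in the two possible ways, giving the positive and negative mergers $\ell_1\oplus_{x,y}\ell_i$ and $\ell_1\ominus_{x,y}\ell_i$, which are the two sums indexed by $i = 2,\dots,n$. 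Since $\Tr Q(\partial p)$ is the Wilson loop of the boundary loop $\partial p$, which traverses $e$ exactly once, each plaquette pairing $\ev{\grad_e W_{\ell_1},\grad_e\Tr Q(\partial p)}$ is the same merger computation with $\ell_i$ replaced by $\partial p$ and $y$ the unique occurrence, yielding $W_{\ell_1\ominus_x p}$ and $W_{\ell_1\oplus_x p}$ summed over $x\in C_1$ with prefactor $N\beta$. Collecting everything, multiplying by $Z_{\beta,\Lambda,N}^{-1}$, and moving the diagonal term to the other side gives the stated identity.

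The main obstacle is the bookkeeping underlying the last two paragraphs: one must verify, at the level of words in $\mathscr{P}(\Lambda)$, that cutting a trace at two marked occurrences of $e^{\pm1}$ and regluing the arcs yields precisely the operations $\propto$, $\times^1/\times^2$, $\oplus$, $\ominus$ as defined, and --- the delicate point --- that the signs coming from the orientations $\omega_x,\omega_y$ and from the relative sign of $\delta_{ac}\delta_{bd}$ and $\delta_{ad}\delta_{bc}$ reproduce the signs and the $\omega_x\omega_y=\pm1$ restrictions in the statement. One also has to treat the degenerate configurations separately (a loop meeting $e$ with both orientations; the insertion conventions for $g_e$ versus $g_e^T = g_e^{-1}$) and to confirm that the base point and arc-ordering conventions in the definitions of split and twist agree with those produced by the contraction. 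Once this dictionary between matrix-index contractions and path operations is nailed down, the remaining manipulations are routine; the $SU(N)$ and $U(N)$ cases then follow by rerunning the same argument with the orthonormal frame and completeness relation for $\mathfrak{su}(N)$, resp. $\mathfrak{u}(N)$, and the character $\Re\Tr$ in place of $\Tr$, invoking the complexified gradient and metric set up above.
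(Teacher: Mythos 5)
Your proposal is correct and follows the same master strategy as the paper: integrate by parts in the single edge variable with $f=W_{\ell_1}$ and $g$ the remaining loops times the unnormalized density, identify $\Delta_e W_{\ell_1}$ with the $-(N-1)m$ term plus splits and twists, identify the cross terms $\ev{\grad_e W_{\ell_1},\grad_e W_{\ell_i}}$ and the plaquette terms with mergers and deformations. Where you differ is in the computational engine for the two key lemmas. You compute intrinsically throughout: $\Delta_e=\sum_{i<j}X_{ij}^2$ in the left-invariant orthonormal frame (correctly discarding $\nabla_{X_{ij}}X_{ij}=0$), with the contraction identity $\sum_{i<j}(e_ie_j^T-e_je_i^T)_{ab}(e_ie_j^T-e_je_i^T)_{cd}=\delta_{ac}\delta_{bd}-\delta_{ad}\delta_{bc}$ and $\sum_{i<j}(e_ie_j^T-e_je_i^T)^2=-(N-1)I$ doing the work, and $\ev{\grad_e f,\grad_e h}=\sum_{i<j}X_{ij}(f)X_{ij}(h)$ for the cross terms. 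The paper instead extends $W_\ell$ to a neighborhood in $\mathbb{R}^{N^2}$, computes Euclidean gradients and derivatives, and projects: the gradient formula of \cref{lem1}, the inner-product computation of \cref{lem2}, and the trace identity $\Tr(P_gL_XR_YP_g)=\tfrac12\Tr X\Tr Y-\tfrac12\Tr(g^{-1}X^TgY)$ of \cref{lem3}, which is exactly your completeness relation in disguise (its proof sums over the same frame $X_{ij}$). Your route avoids the extrinsic extension and tangent projections entirely and makes the Makeenko--Migdal index contraction explicit; the paper's route packages that contraction once in \cref{lem3} and reuses it uniformly for both the Hessian trace and the gradient pairings. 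Be aware that what you defer as ``bookkeeping'' --- the sign casework in $\omega_x,\omega_y$ deciding which insertions produce $\propto_{x,y}$ versus $\times^1_{x,y},\times^2_{x,y}$, and $\oplus$ versus $\ominus$ (in particular the $g_y\mapsto H$ versus $g_y^{-1}\mapsto -g^{-1}Hg^{-1}$ distinction at inverse occurrences) --- is precisely the content of the paper's \cref{lem4} and \cref{lem2}, so it is the bulk of the actual proof rather than a routine afterthought; but the plan as stated would carry it through.
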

\begin{theorem}[$SU(N)$ and $U(N)$ master loop equation]\label{thm2}
Let $(\ell_1, \dots, \ell_n)$ be a sequence of loops. Let $e$ be an edge that lies in at least one of $\ell_i$. Let $\eta=0$ for $G=U(N)$ and $\eta=1$  when $G=SU(N)$. Let $A_i$, $B_i$, $C_i$, $\mathcal{P}^+(e)$ be as before. In addition, let $\mathcal{P}(e)$ be the set of plaquettes containing $e$. Finally, let $t_i = \abs{A_i}-\abs{B_i}$ and $t= \sum_i t_i$. Then 
\begin{align*}
&\qty(mN - \frac{\eta t_1 t}{N})\mathbb{E}\qty[W_{\ell_1}\dots W_{\ell_n}]= \sum_{x,y\in C_1, \omega_x\omega_y = -1}\mathbb{E}\qty[W_{\times^1_{x,y}\ell_1}W_{\times^2_{x,y}\ell_1}W_{\ell_2}\dots W_{\ell_n}] \\ &- \sum_{x\neq y \in C_1, \omega_x \omega_y = 1}\mathbb{E}\qty[W_{\times^1_{x,y}\ell_1}W_{\times^2_{x,y}\ell_1}W_{\ell_2}\dots W_{\ell_n}] +\sum_{i=2}^n \sum_{x\in C_1, y\in C_i, \omega_x \omega_y = -1}\mathbb{E}\qty[W_{\ell_1\ominus_{x,y}\ell_i}\prod_{j\neq i, 1}W_{\ell_j}] \\ &- \sum_{i=2}^n \sum_{x\in C_1, y\in C_i, \omega_x \omega_y = 1} \mathbb{E}\qty[W_{\ell_1\oplus_{x,y}\ell_i}\prod_{j\neq i, 1}W_{\ell_j}]
+ \frac{\beta N}{2}\sum_{i=2}^n \sum_{p\in\mathcal{P}^+(e), x\in C_1}\mathbb{E}\qty[W_{\ell_1\ominus_x p} W_{\ell_2}\dots W_{\ell_n}] 
\\&-\frac{\beta N}{2}\sum_{i=2}^n \sum_{p\in\mathcal{P}^+(e), x\in C_1}\mathbb{E}\qty[W_{\ell_1\oplus_x p} W_{\ell_2}\dots W_{\ell_n}] -\eta \beta \sum_{p\in\mathcal{P}(e)}t_1t_p \mathbb{E}[W_{\ell_1}W_{p^{-1}}W_{\ell_2}\dots W_{\ell_n}]\\
\end{align*}
\end{theorem}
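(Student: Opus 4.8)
The plan is to apply the Laplacian integration by parts lemma in the group variable $g_e$ attached to the distinguished edge $e$, with $f = W_{\ell_1}$ and $g = \big(\prod_{i=2}^n W_{\ell_i}\big)\exp\big(\beta N\sum_{p\in\mathcal{P}^+_\Lambda}\Re\Tr Q(\partial p)\big)$, then integrate the remaining edge variables against Haar measure and divide by $Z_{\beta,\Lambda,N}$. Expanding $\grad_{g_e} g$ by the Leibniz rule separates $g$ into the contributions of the other Wilson loops $W_{\ell_i}$, $i\ge 2$, and of the plaquette weight, which contributes $\beta N\sum_{p\in\mathcal{P}^+(e)}\grad_{g_e}\Re\Tr Q(\partial p)$. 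Everything is thereby reduced to evaluating $\mathbb{E}\big[(\Delta_{g_e}W_{\ell_1})\prod_{i\ge 2}W_{\ell_i}\big]$, the pairings $\mathbb{E}\big[\langle\grad_{g_e}W_{\ell_1},\grad_{g_e}W_{\ell_i}\rangle\prod_{j\ne 1,i}W_{\ell_j}\big]$, and $\mathbb{E}\big[\langle\grad_{g_e}W_{\ell_1},\grad_{g_e}\Re\Tr Q(\partial p)\rangle\prod_{i\ge 2}W_{\ell_i}\big]$.

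Next I would compute these gradients and the Laplacian in the left-invariant orthonormal frame $\{X_\alpha = gT_\alpha\}$, with $\{T_\alpha\}$ an orthonormal basis of $\mathfrak{u}(N)$ for the given metric. For a loop $\ell$ one has $X_\alpha W_\ell = \sum_{x\in C_\ell}\omega_x\,\Re\Tr(\widetilde Q_x T_\alpha)$, where $\widetilde Q_x$ is $Q(\ell)$ cyclically rotated so that $T_\alpha$ sits adjacent to the $x$-th copy of $e^{\omega_x}$, the orientation sign appearing because $X_\alpha(g_e^{-1}) = -T_\alpha g_e^{-1}$. Applying $X_\alpha$ again and summing over $\alpha$ via the $\mathfrak{u}(N)$ completeness relation $\sum_\alpha (T_\alpha)_{ab}(T_\alpha)_{cd} = -c_N\,\delta_{ad}\delta_{bc}$ — with $c_N$ the Casimir eigenvalue of the defining representation in the normalization producing the stated coefficient $mN$, and the single-term form of this relation (no $\delta_{ac}\delta_{bd}$) being precisely why, unlike for $SO(N)$, no twist terms $\propto_{x,y}$ appear — one obtains: (i) two derivatives hitting the same occurrence of $e$ in $\ell_1$ give the Casimir scalar, $-c_N|C_1|\,W_{\ell_1} = -mN\,W_{\ell_1}$; (ii) two derivatives hitting distinct occurrences $x\ne y$ of $\ell_1$ factor the single trace into the two loops of the split, producing $\pm\,W_{\times^1_{x,y}\ell_1}W_{\times^2_{x,y}\ell_1}$, with sign $-$ when $\omega_x\omega_y=1$ and $+$ when $\omega_x\omega_y=-1$; (iii) one derivative on $\ell_1$ at $x$ and one on $\ell_i$ at $y$ reconnect the loops into $W_{\ell_1\ominus_{x,y}\ell_i}$ (if $\omega_x\omega_y=-1$) or $W_{\ell_1\oplus_{x,y}\ell_i}$ (if $\omega_x\omega_y=1$); (iv) one derivative on $\ell_1$ and one on a plaquette loop $\partial p$ give the analogous $W_{\ell_1\ominus_x p}$ or $W_{\ell_1\oplus_x p}$, carrying an extra factor $\tfrac12$ relative to (iii) because the plaquette weight uses $f=\Re\Tr$, which accounts for the coefficient $\tfrac{\beta N}{2}$. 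In each case one must verify that $\widetilde Q_x\widetilde Q_y^{\pm 1}$, after the cyclic rotations, is exactly $Q$ of the loop named by the corresponding operation in Section 2.2.

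For the $SU(N)$ case I would run the same computation while tracking the extra central direction: write $\mathfrak{u}(N) = \mathfrak{su}(N)\oplus\mathbb{R}\cdot iI$ orthogonally and let $X_0$ be the normalized central left-invariant field, so that $\Delta^{U(N)} = \Delta^{SU(N)} + X_0^2$ and the two metrics differ by the rank-one central term. The key simplification is that $X_0$ acts on $\Tr Q(\ell)$, and on any product $\prod_k\Tr Q(\ell_k)$, as multiplication by a scalar proportional to $i$ times the signed winding $\sum_k t_k$ of $e$. Hence the $SU(N)$ correction to the Laplacian term is a multiple of $W_{\ell_1}$ with coefficient $\propto t_1^2/N$, the corrections to the $\ell_i$-pairings are multiples of $W_{\ell_1}\cdots W_{\ell_n}$ with coefficients $\propto t_1 t_i/N$, and these combine into the $\eta\,t_1 t/N$ term on the left; the correction to the plaquette pairing produces the final term $-\eta\beta\sum_{p\in\mathcal{P}(e)}t_1 t_p\,\mathbb{E}[W_{\ell_1}W_{p^{-1}}\cdots]$, summed over both orientations of $p$. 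For $U(N)$, $\eta=0$ and all of this drops. Collecting everything, moving the Casimir contribution to the left-hand side, and dividing by $Z_{\beta,\Lambda,N}$ yields the stated identity.

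The main obstacle is steps (ii)--(iv): the combinatorial verification that the trace products emitted by the completeness relation, across every sign and orientation case and for the same loop versus distinct loops versus plaquettes, match the path operations $\times^i_{x,y}$, $\oplus_{x,y}$, $\ominus_{x,y}$, $\oplus_x$, $\ominus_x$ exactly as defined, together with the care required to keep $\Re\Tr$ (a complex trace) under control — in particular checking that the would-be imaginary cross terms organize so that only the $W$'s survive. The central-direction argument for $SU(N)$ is conceptually the cleanest part, since the $\mathfrak{su}(N)$ completeness relation splits into the $\mathfrak{u}(N)$ piece plus an isolated rank-one term that is handled entirely by the scalar action of $X_0$.
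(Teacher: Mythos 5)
Your proposal is correct in outline and shares the paper's skeleton: integrate by parts at the edge $e$ with $f=W_{\ell_1}$ and $g=\big(\prod_{i\ge 2}W_{\ell_i}\big)\exp\big(\beta N\sum_p\Re\Tr Q_p\big)$, expand by Leibniz, and reduce everything to $\Delta_e W_{\ell_1}$, the pairings $\ev{\grad_e W_{\ell_1},\grad_e W_{\ell_i}}$, and the pairing with $\grad_e\Re\Tr Q_p$. Where you genuinely differ is the computational engine for these three inputs: the paper extends $W_\ell$ off the group, takes Euclidean gradients, applies the tangent projection $P_g$ (with the extra $\tfrac{\eta}{2N}\Im\Tr(\cdot)\,g$ correction for $SU(N)$), and evaluates the divergence via the operator-trace identity for $\Tr(P_gL_XR_YP_g)$ (\cref{gradlemma2}--\cref{sunlaplace}); you work in a left-invariant orthonormal frame and invoke the $\mathfrak{u}(N)$ completeness relation together with the orthogonal splitting $\mathfrak{u}(N)=\mathfrak{su}(N)\oplus\mathbb{R}\,iI$, routing every $1/N$ correction through the central field $X_0$. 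Both routes are sound and of comparable length once written out; yours makes the absence of twist terms and the provenance of the $\eta t_1t/N$ and $t_1t_p$ corrections conceptually transparent (Casimir plus a rank-one central term), while the paper's projected-gradient formulas are reused verbatim when pairing with the action, and its trace lemma plays exactly the role of your completeness relation. Two cautions on the part you yourself flag as the main obstacle. First, the real/complex bookkeeping is not merely organizational: your frame derivative is written as $\omega_x\Re\Tr(\widetilde Q_xT_\alpha)$, but the Wilson loops in the statement must be treated as full complex traces (only the action carries $\Re\Tr$); the mixed real--imaginary cross terms do not vanish but recombine into the complex mergers and splittings, and doing so requires the orientation-reversal comparison of $W_{\ell_1\oplus_{x,y}\ell_2}$ with $W_{\ell_2\oplus_{x,y}\ell_1}$ (and likewise for $\ominus$), which is where most of the paper's effort in \cref{gradiplemma} and \cref{gradmeasure} is spent; in particular the relative factor producing $\tfrac{\beta N}{2}$, which you correctly attribute to the action using $\Re\Tr$, only comes out once this is done consistently. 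Second, fix one metric normalization throughout: the master loop equation is normalization-independent, but your Casimir constant ($mN$ versus the paper's intermediate $2mN$ followed by a global division by $2$), the pairing coefficients, and the plaquette factor must all be computed in the same convention or a factor of $2$ will be lost.
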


\section{$SO(N)$ analysis}
The goal of this section is to prove \cref{thm1}

\subsection{Gradient identities for Wilson loops}
\begin{lemma}\label{lem1}
Let $\ell$ be a Wilson loop. Let $C$ denote the set of occurrences of the edge $e$ in $\ell$. Let $\grad_e$ denote the gradient with respect to the edge $e$. Then 
\begin{equation}
  \grad_e W_{\ell} = \sum_{x\in C} Q_{\ell\setminus e_x}^{-\omega_x} - Q_e Q_{\ell\setminus e_x}^{\omega_x}Q_e
\end{equation}

\end{lemma}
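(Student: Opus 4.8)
\emph{Plan.} I would compute $\grad_e W_\ell$ directly from the definition of the Riemannian gradient in the left-invariant orthonormal frame $X_{ij}=Q_e E_{ij}$ (for $i<j$), where $E_{ij}:=e_ie_j^{T}-e_je_i^{T}$ and $Q_e=g_e$ is the edge variable. Since this frame is orthonormal for the bi-invariant metric, for every smooth $F$ on $SO(N)$ one has $\grad_e F=\sum_{i<j}(X_{ij}F)\,X_{ij}$, and the directional derivative along the frame is $X_{ij}F=\left.\frac{d}{dt}\right|_{t=0}F(Q_ee^{tE_{ij}})$. So the proof reduces to two steps: (i) evaluate the scalars $X_{ij}W_\ell$, and (ii) carry out the resummation $\grad_e W_\ell=\sum_{i<j}(X_{ij}W_\ell)\,Q_eE_{ij}$.

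For step (i): since $C$ indexes every occurrence of $e^{\pm1}$ in $\ell$, I would write $\ell=a_1e^{\omega_1}\cdots a_ne^{\omega_n}a_{n+1}$ with each $a_k$ an $e$-free sub-path, so that $Q(\ell)$, viewed as a function of $Q_e$, is a word in which $Q_e^{\pm1}$ occurs exactly once for each $x\in C$, all remaining factors $Q(a_k)$ being constant in $Q_e$. The Leibniz rule then expresses $X_{ij}W_\ell$ as a sum over $x\in C$ in which only the factor attached to $x$ is differentiated; using cyclicity of the trace to bring that factor to the front, and noting that the complementary product of the remaining factors is exactly $Q_{\ell\setminus e_x}$ (by the definition of $\ell\setminus e_x$), the $x$-th summand becomes $\Tr(D_x\,Q_{\ell\setminus e_x})$, where $D_x=\left.\frac{d}{dt}\right|_{0}(Q_ee^{tE_{ij}})^{\omega_x}$ equals $Q_eE_{ij}$ when $\omega_x=1$ and $-E_{ij}Q_e^{-1}$ when $\omega_x=-1$. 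One more cyclic rearrangement gives $X_{ij}W_\ell=\sum_{\omega_x=1}\Tr((Q_{\ell\setminus e_x}Q_e)E_{ij})-\sum_{\omega_x=-1}\Tr((Q_e^{-1}Q_{\ell\setminus e_x})E_{ij})$.

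For step (ii): the only linear-algebra input is the identity $\sum_{i<j}\Tr(AE_{ij})\,E_{ij}=A^{T}-A$, valid for any $A\in\mathbb{R}^{N\times N}$, which follows in one line from $\Tr(AE_{ij})=A_{ji}-A_{ij}$ and summing the rank-one matrices $e_ie_j^{T}$. Plugging $A=Q_{\ell\setminus e_x}Q_e$ (for the $\omega_x=1$ terms) and $A=-Q_e^{-1}Q_{\ell\setminus e_x}$ (for the $\omega_x=-1$ terms) into $Q_e\sum_{i<j}\Tr(AE_{ij})E_{ij}=Q_e(A^{T}-A)$, and then using that $Q_e$ and $Q_{\ell\setminus e_x}$ lie in $SO(N)$ so that $Q_e^{T}=Q_e^{-1}$ and $Q_{\ell\setminus e_x}^{T}=Q_{\ell\setminus e_x}^{-1}$, both orientation cases collapse — after the obvious cancellations — to the single expression $Q_{\ell\setminus e_x}^{-\omega_x}-Q_eQ_{\ell\setminus e_x}^{\omega_x}Q_e$. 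Summing over $x\in C$ yields the claimed formula; as a consistency check, the right-hand side visibly lies in $Q_e\,\mathfrak{so}(N)=T_{Q_e}SO(N)$, as the value of a gradient must.

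The main obstacle is the bookkeeping in step (i): one must track the word-position of each occurrence $e_x$ inside $\ell$, treat the orientations $\omega_x=\pm1$ separately, and apply cyclicity of the trace repeatedly so that the differentiated factor lands in the slot where the resummation identity of step (ii) applies. Step (ii) itself is short, but it is precisely there that orthogonality of $SO(N)$ enters in an essential way; the analogous gradient identity for $SU(N)$ and $U(N)$ will instead need a frame for $\mathfrak{su}(N)$ (resp. $\mathfrak{u}(N)$) together with the complexified metric and gradient introduced above, which is why the orthogonal case is handled first.
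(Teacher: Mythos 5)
Your proposal is correct, but it takes a genuinely different route from the paper. The paper extends $W_\ell$ to the ambient space $\mathbb{R}^{N^2}$, computes the Euclidean differential $d_eW_\ell(H)=\sum_{x\in C}\Tr(Q_{\ell\setminus e_x}^{\omega_x}H)$, reads off the Euclidean gradient $\sum_x 2Q_{\ell\setminus e_x}^{-\omega_x}$ (the factor $2$ coming from the $\tfrac12$ in the metric), and then applies the orthogonal projection $P_g(X)=\tfrac12X-\tfrac12 gX^Tg$ onto $T_{Q_e}SO(N)$. You instead stay intrinsic: you compute the frame coefficients $X_{ij}W_\ell$ and resum via the identity $\sum_{i<j}\Tr(AE_{ij})E_{ij}=A^T-A$. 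I checked both orientation cases in your step (ii): with $A=Q_{\ell\setminus e_x}Q_e$ you get $Q_e(A^T-A)=Q_{\ell\setminus e_x}^{-1}-Q_eQ_{\ell\setminus e_x}Q_e$, and with $A=-Q_e^{-1}Q_{\ell\setminus e_x}$ you get $Q_{\ell\setminus e_x}-Q_eQ_{\ell\setminus e_x}^{-1}Q_e$, which is exactly $Q_{\ell\setminus e_x}^{-\omega_x}-Q_eQ_{\ell\setminus e_x}^{\omega_x}Q_e$ in each case, so the computation closes. Your resummation identity is of course the projection $P_{Q_e}$ written in the frame, so the two arguments are dual to one another; what the paper's version buys is reusability --- the same ``Euclidean gradient then project'' template carries over verbatim to $SU(N)$ and $U(N)$ with only the projection formula changed (as in Lemma \ref{gradlemma2}), and the explicit Euclidean gradient is then fed directly into the Laplacian computation of Lemma \ref{lem4}. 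What your version buys is that tangency of the answer is manifest and no ambient extension of $W_\ell$ is needed. One small bookkeeping point to make explicit if you write this up: the cyclic reordering that identifies the complementary product with $Q_{\ell\setminus e_x}$ must respect the convention $W_\ell=\Tr(Q_{\ell\setminus e_x}\,Q_e^{\omega_x})$ used in the definition of $\ell\setminus e_x$ (the excised word is read starting from the edge after $e_x$), which you implicitly do when you write the $\omega_x=1$ summand as $\Tr((Q_{\ell\setminus e_x}Q_e)E_{ij})$.
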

\begin{proof}
  $W_{\ell}$ extends to a smooth function in a neighborhood of $SO(N)$ in the obvious way. We may thus first compute the euclidean differential:
  $$d_e W_{\ell} = \sum_{x\in C} d_x W_{\ell}$$
  For $d_x$,  we may rewrite $W_{\ell} = \Tr(Q_{\ell\setminus e_x} g^{\omega_x})$. By orthogonality, 
  $$ =  \Tr(Q_{\ell\setminus e_x}^{\omega_x}g)$$
  Thus, the euclidean differential is 
  $$d_e W_\ell(H)=\sum_{x\in C} \Tr(Q_{\ell\setminus e_x}^{\omega_x} H)$$
  Recall that the gradient is defined by the identity $\ev{\grad f, H} =\frac{1}{2}\Tr(\grad f^T H)=df(H)$.
  Therefore, the euclidean gradient is 
  $$\grad^{euclidean}_e W_{\ell}(g) = \sum_{x\in C} 2Q_{\ell\setminus e_x}^{-\omega_x}$$
  Recall that the tangent projection onto $SO(N)$ is 
  $$P_g(X) = \frac{1}{2}X - \frac{1}{2}gX^T g$$
  Setting $g=Q_e$, we finally arrive at the result: 
  $$\grad_e W_{\ell} = \sum_{x\in C} Q_{\ell\setminus e_x}^{-\omega_x} - Q_e Q_{\ell\setminus e_x}^{\omega_x}Q_e$$
\end{proof}
\begin{lemma}\label{lem2}
  Let $\ell_1$ and $\ell_2$ be Wilson loops. 
  \begin{equation}
    \ev{\grad W_{\ell_1}, \grad W_{\ell_2}} = \sum_{x\in C_1, y\in C_2} W_{\ell_1\ominus_{x,y}\ell_2} - \sum_{x\in C_1, y\in C_2} W_{\ell_1\oplus_{x,y}\ell_2}
  \end{equation}
\end{lemma}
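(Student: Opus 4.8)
The plan is to compute the inner product $\langle \grad W_{\ell_1}, \grad W_{\ell_2}\rangle$ directly from the formula for $\grad_e W_\ell$ supplied by \cref{lem1}, since only the edge $e$ contributes (gradients with respect to distinct edges are orthogonal, as they live in tangent spaces of different factors of $G^{E^+_\Lambda}$). Thus it suffices to expand
$$\Big\langle \sum_{x\in C_1}\big(Q^{-\omega_x}_{\ell_1\setminus e_x} - Q_e Q^{\omega_x}_{\ell_1\setminus e_x} Q_e\big),\ \sum_{y\in C_2}\big(Q^{-\omega_y}_{\ell_2\setminus e_y} - Q_e Q^{\omega_y}_{\ell_2\setminus e_y} Q_e\big)\Big\rangle$$
using $\langle X, Y\rangle = \tfrac12\Tr(X^T Y)$, and bilinearity, so the whole thing becomes a sum over pairs $(x,y)\in C_1\times C_2$ of four trace terms.

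The key computational step is to identify each of the four trace terms with a Wilson loop of a merged path. First I would record the transpose rules on $SO(N)$: $Q_e^T = Q_e^{-1}$ and $(Q_{\ell_i\setminus e_x})^T = Q_{(\ell_i\setminus e_x)^{-1}} = Q^{-1}_{\ell_i\setminus e_x}$ (reversing the path), together with cyclicity of trace and $\Tr(M) = \Tr(M^T)$. For the cross term $-\tfrac12\Tr\big((Q^{-\omega_x}_{\ell_1\setminus e_x})^T\, Q_e Q^{\omega_y}_{\ell_2\setminus e_y} Q_e\big)$, transposing the first factor gives $\Tr(Q^{\omega_x}_{\ell_1\setminus e_x} Q_e Q^{\omega_y}_{\ell_2\setminus e_y} Q_e)$; inspecting the definition of the positive merger, $\ell_1\oplus_{x,y}\ell_2 = (\ell_1\setminus e_x)e^{\omega_x}(\ell_2\setminus e_y)^{\omega_x\omega_y} e^{\omega_x}$ up to the conventions of the definition, one sees this trace equals $W_{\ell_1\oplus_{x,y}\ell_2}$ (the two insertions of $e^{\pm}$ come precisely from the two $Q_e$ factors, and the sign $\omega_x\omega_y$ in the exponent of the second piece is absorbed by cyclicity/transposition when $\omega_x\neq\omega_y$). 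Similarly the term $\tfrac12\Tr\big((Q^{-\omega_x}_{\ell_1\setminus e_x})^T Q^{-\omega_y}_{\ell_2\setminus e_y}\big) = \tfrac12\Tr(Q^{\omega_x}_{\ell_1\setminus e_x} Q^{-\omega_y}_{\ell_2\setminus e_y})$ matches $W_{\ell_1\ominus_{x,y}\ell_2}$ from the negative merger definition. The remaining two terms, $\tfrac12\Tr(Q_e Q^{\omega_x}_{\ell_1\setminus e_x}Q_e Q_e Q^{\omega_y}_{\ell_2\setminus e_y} Q_e)$ and its partner, collapse via $Q_e Q_e^{-1}=I$ (after using $Q_e^T=Q_e^{-1}$ on the transposed factor) to the same two Wilson loops, which is why the factor $\tfrac12$ disappears and each merger appears with coefficient $1$.

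The main obstacle I anticipate is purely bookkeeping: matching the orientation exponents $\omega_x,\omega_y$ and the placement of the excised edge against the precise indexing conventions in the definitions of positive and negative merger (the $\omega_x\omega_y$ exponents and the ``$\setminus e_y e$'' notation). I would handle this by checking the four sign cases $(\omega_x,\omega_y)\in\{(1,1),(1,-1),(-1,1),(-1,-1)\}$ separately, verifying in each that after applying the transpose/cyclicity identities the trace literally spells out the concatenated word defining the merged loop; the case analysis is routine once the transpose rules are in hand. A secondary point to be careful about is that when $\ell_1 = \ell_2$ the ``self-pairs'' $x=y$ are included in the sums on the right-hand side, which is consistent since the lemma is stated for general (possibly equal) loops and \cref{lem1} already sums over all of $C$. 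Assembling the four term-families then yields exactly $\sum_{x\in C_1, y\in C_2} W_{\ell_1\ominus_{x,y}\ell_2} - \sum_{x\in C_1,y\in C_2} W_{\ell_1\oplus_{x,y}\ell_2}$, completing the proof.
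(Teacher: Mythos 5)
Your proposal is correct and follows essentially the same route as the paper's proof: expand $\ev{\grad_e W_{\ell_1},\grad_e W_{\ell_2}}$ bilinearly into four trace terms per pair $(x,y)$, use $Q^T=Q^{-1}$ together with $\Tr(M)=\Tr(M^T)$ and cyclicity to show the two diagonal terms both equal $\tfrac12\Tr(Q^{\omega_x}_{\ell_1\setminus e_x}Q^{-\omega_y}_{\ell_2\setminus e_y})=\tfrac12 W_{\ell_1\ominus_{x,y}\ell_2}$ and the two cross terms both equal $-\tfrac12\Tr(Q^{\omega_x}_{\ell_1\setminus e_x}Q_e Q^{\omega_y}_{\ell_2\setminus e_y}Q_e)=-\tfrac12 W_{\ell_1\oplus_{x,y}\ell_2}$, so the halves cancel. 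This matches the paper's computation, including the identification of the merged-loop traces with the merger definitions.
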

\begin{proof}
  \begin{align*} 
    &\ev{\grad W_{\ell_1}, \grad W_{\ell_2}} = \sum_{x\in C_1, y\in C_2}\ev{Q_{\ell_1\setminus e_x}^{-\omega_x} - g Q_{\ell_1\setminus e_x}^{\omega_x}g, Q_{\ell_2\setminus e_y}^{-\omega_y} - g Q_{\ell_2\setminus e_y}^{\omega_y}g}\\
    &= \sum_{x\in C_1, y\in C_2} \frac{1}{2}\qty(\Tr(Q_{\ell_1\setminus e_x}^{\omega_x} Q_{\ell_2\setminus e_y}^{-\omega_y}) + \Tr(g^{-1} Q_{\ell_1\setminus e_x}^{-\omega_x} Q_{\ell_2\setminus e_y}^{\omega_y}g)) - \frac{1}{2}\qty(\Tr(g^{-1}Q_{\ell_1\setminus e_x}^{-\omega_x} g^{-1} Q_{\ell_2\setminus e_y}^{-\omega_y}) + \Tr(Q_{\ell_1\setminus e_x}^{\omega_x} g Q_{\ell_2\setminus e_y}^{\omega_y}g))\\
    &= \sum_{x\in C_1, y\in C_2} \Tr(Q_{\ell_1\setminus e_x}^{\omega_x}Q_{\ell_2\setminus e_y}^{-\omega_y}) - \Tr(Q_{\ell_1\setminus e_x}^{\omega_x }g Q_{\ell_2\setminus e_y}^{\omega_y}g) = \sum_{x\in C_1, y\in C_2} W_{\ell_1\ominus_{x,y} \ell_2} - W_{\ell_1\oplus_{x,y}\ell_2}
  \end{align*}
\end{proof}
\subsection{Laplacian of Wilson loops}
\begin{lemma}\label{lem3}
  Let $P_g$ denote the tangent projection. Let $L_X$ and $R_X$ denote left and right multiplication, respectively. 
  Then 
  \begin{equation}
    \Tr(P_gL_X R_Y P_g)= \frac{1}{2}\Tr X \Tr Y - \frac{1}{2}\Tr(g^{-1}X^T g Y)
  \end{equation}
\end{lemma}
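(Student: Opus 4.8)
The plan is to treat $L_X$, $R_Y$ and $P_g$ as linear operators on the $N^2$-dimensional space $\mathbb{R}^{N\times N}$ of real matrices and to compute the operator trace on the left-hand side directly in the standard basis $\{E_{ij}\}$ of elementary matrices; since the operator trace is basis-independent, $\Tr\Phi = \sum_{i,j}\qty(\Phi(E_{ij}))_{ij}$ for any linear $\Phi$. The two inputs I would isolate first are the elementary identities
\begin{align*}
\Tr(L_A R_B) = \Tr A \,\Tr B, \qquad \Tr(L_A R_B T) = \Tr(A^T B),
\end{align*}
valid for all $A,B\in\mathbb{R}^{N\times N}$, where $T$ denotes the transpose operator $M\mapsto M^T$. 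Both follow from a one-line computation: $\qty(A E_{ij} B)_{ij} = A_{ii}B_{jj}$, while $\qty(A E_{ij}^T B)_{ij} = \qty(A E_{ji}B)_{ij} = A_{ij}B_{ij}$, and one sums over $i,j$.

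Now write the tangent projection as $P_g = \tfrac12\qty(\mathrm{Id} - C_g)$ with $C_g(M) = gM^Tg$, so that $C_g = L_g R_g T$. Using $g^T = g^{-1}$ one checks directly that $P_g^2 = P_g$, and hence by cyclicity of the operator trace
\begin{equation*}
\Tr(P_g L_X R_Y P_g) = \Tr(P_g^{2} L_X R_Y) = \Tr(P_g L_X R_Y) = \tfrac12\Tr(L_X R_Y) - \tfrac12\Tr(C_g L_X R_Y).
\end{equation*}
The first term equals $\tfrac12\Tr X\,\Tr Y$ by the first identity above. For the second, I would unravel $C_g L_X R_Y$ as the map $M\mapsto \qty(gY^T)\,M^T\,\qty(X^T g)$, i.e. $L_{gY^T}R_{X^T g}T$, and apply the second identity together with $g^T=g^{-1}$ and cyclicity to obtain $\Tr(C_g L_X R_Y) = \Tr\qty(\qty(gY^T)^T\qty(X^T g)) = \Tr\qty(Y g^{-1}X^T g) = \Tr\qty(g^{-1}X^T g\,Y)$. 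Combining the two contributions gives the stated formula.

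This is essentially all bookkeeping and I do not expect a genuine obstacle; the only points that need care are keeping track of how the transpose interacts with the left and right multiplications when composing $C_g$ with $L_X R_Y$, and invoking $g^T = g^{-1}$ (legitimate since $g\in SO(N)$) both for the idempotency of $P_g$ and to put the final trace in the claimed form. If one would rather not use the idempotency shortcut, one can instead expand $P_g L_X R_Y P_g$ into its four terms and evaluate each with the same two trace identities; this is a little longer but equally routine.
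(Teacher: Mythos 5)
Your proof is correct, but it takes a genuinely different route from the paper. The paper evaluates the trace directly on the tangent space: it sums $\ev{X_{ij}, X\,X_{ij}\,Y}$ over the orthonormal frame $X_{ij}=g(e_ie_j^T-e_je_i^T)$, $i<j$, of $T_gSO(N)$ with respect to the half-Euclidean metric, and grinds out the four resulting index sums. You instead work with the ambient operator trace on all of $\mathbb{R}^{N\times N}$: writing $P_g=\tfrac12(\mathrm{Id}-C_g)$ with $C_g=L_gR_gT$, you use $C_g^2=\mathrm{Id}$ (hence $P_g^2=P_g$) and cyclicity to collapse $\Tr(P_gL_XR_YP_g)$ to $\Tr(P_gL_XR_Y)$, then finish with the two elementary identities $\Tr(L_AR_B)=\Tr A\,\Tr B$ and $\Tr(L_AR_BT)=\Tr(A^TB)$. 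The two readings of the left-hand side agree because $P_g$ is the orthogonal projection (self-adjoint idempotent) for the Frobenius-type metric, so the ambient trace of $P_g\Phi P_g$ equals the trace of its compression to the tangent space, which is what the divergence computation in Lemma \ref{lem4} actually needs; it would be worth saying this in one line. What your argument buys is brevity and conceptual clarity: no frame-by-frame index bookkeeping, and it isolates exactly which structural facts are used ($C_g$ an isometric involution of the form $L_gR_gT$, plus the two universal trace identities). What the paper's computation buys is that it is carried out in precisely the form reused later: the analogous $SU(N)$/$U(N)$ statement (Lemma \ref{suntrace}) is proved by the same frame summation with the extra $i\mathbb{R}g$ direction subtracted off, whereas with your method one would instead adapt the projection $P_g$ to include the traceless correction and redo the cyclicity step — equally feasible, but a separate check.
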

\begin{proof}
  \begin{align*}
    &\Tr(P_g L_X R_YP_g) = \sum_{i<j}\ev{ge_i e_j^T - ge_j e_i^T, X(ge_i e_j^T - g e_j e_i^T)Y}\\
  &=\frac{1}{2}\sum_{i<j}\qty[\Tr(e_j e_i^T g^{-1} X g e_i e_j^TY) - \Tr(e_i e_j^T g^{-1} X ge_i e_j^T Y) - \Tr(e_j e_i^T g^{-1} X g e_j e_i^T Y) + \Tr(e_i e_j^T g^{-1} X ge_j e_i^T Y)]\\
   &=\frac{1}{2}\sum_{i<j} \qty[(g^{-1}Xg)_{ii}Y_{jj} + (g^{-1}X g)_{jj}Y_{ii} - (g^{-1}Xg)_{ij} Y_{ij} - (g^{-1}Xg)_{ji}Y_{ji}] \\
   &=\frac{1}{2}\Tr(X)\Tr(Y) - \frac{1}{2}\Tr(g^{-1}X^T g Y)\\
  \end{align*}
\end{proof}
\begin{lemma}\label{lem4}
  Let $W_\ell$ be a Wilson loop and $\Delta_e$ the Laplace-Beltrami operator at the edge $e$. Let $m$ be the number of occurrences of $\pm e$ in $\ell$. Then 
  \begin{equation}
    \begin{split}
    \Delta_e W_{\ell} = -(N-1)mW_{\ell} -\sum_{x\neq y\in C, \omega_x \omega_y = 1} W_{\ell\times^1_{x,y}}W_{\ell\times^2_{x,y}} + \sum_{x\neq y \in C, \omega_x \omega_y = 1} W_{\ell\propto_{x,y}}\\
+ \sum_{x,y\in C, \omega_x\omega_y = -1} W_{\ell\times^1_{x,y}}W_{\ell\times^2_{x,y}} - \sum_{x,y\in C, \omega_x\omega_y = -1} W_{\ell\propto_{x,y}}
    \end{split}
  \end{equation}
\end{lemma}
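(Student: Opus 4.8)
The plan is to compute $\Delta_e W_\ell = \sum_{\alpha} \nabla_{X_\alpha} \nabla_{X_\alpha} W_\ell$ directly in the orthonormal frame $X_{ij} = Q_e(e_i e_j^T - e_j e_i^T)$ (indexed by $i<j$) for $T_{Q_e}SO(N)$, rather than by differentiating the gradient formula of Lemma~\ref{lem1}. The key point is that for a bi-invariant metric on a Lie group, the Laplace--Beltrami operator acting on the restriction of a Euclidean-smooth function $F$ can be written as $\Delta_e W_\ell = \sum_{i<j} \Tr\!\big(P_{Q_e} L_{X_{ij}} R_{X_{ij}} P_{Q_e} \cdot \operatorname{Hess}^{\mathrm{eucl}} W_\ell\big) + (\text{first-order terms})$, but in fact the cleaner route is to observe that since $X_{ij}$ generates a one-parameter subgroup, $\nabla_{X_{ij}}\nabla_{X_{ij}} W_\ell = \frac{d^2}{dt^2}\big|_{t=0} W_\ell(\dots e^{tA_{ij}} \dots)$ where $A_{ij} = e_ie_j^T - e_je_i^T$ acts by left-translation at each occurrence of $e$. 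Expanding $W_\ell$ as a sum over ordered pairs of occurrences $x,y \in C$ (the second derivative of a product of linear-in-$g$ factors hits either the same occurrence twice or two distinct occurrences), one gets a diagonal term $\sum_{x\in C}$ and an off-diagonal term $\sum_{x \neq y}$ (with a sign $\omega_x\omega_y$ from the orientations).

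First I would handle the diagonal term: $\sum_{i<j} \frac{d^2}{dt^2}\big|_0 \Tr(Q_{\ell\setminus e_x} e^{\omega_x t A_{ij}})$ — or more precisely the part where both derivatives land on the same factor $g^{\omega_x}$ — which equals $\sum_{i<j}\Tr(Q_{\ell\setminus e_x} Q_e A_{ij}^2)$ (using $\omega_x^2 = 1$). Since $\sum_{i<j} A_{ij}^2 = -(N-1)I$ (the Casimir of $\mathfrak{so}(N)$ in the defining representation, with this normalization), this contributes $-(N-1)\sum_{x\in C} W_\ell = -(N-1)m W_\ell$. This is where the factor $N-1$ and the multiplicity $m = |C|$ come from.

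Next I would handle the off-diagonal term, for a fixed pair $x \neq y$: this is $\sum_{i<j} \Tr\big(Q_{\ell\setminus\{e_x,e_y\}} \, Q_e^{\omega_x} A_{ij}^{?} \cdots A_{ij}^{?} \cdots\big)$, and the crucial computation is $\sum_{i<j} (Q_e A_{ij})_{\text{left-inserted at }x} \otimes (Q_e A_{ij})_{\text{left-inserted at }y}$, which is exactly the kind of double-insertion sum controlled by Lemma~\ref{lem3} (the identity $\sum_{i<j} A_{ij} \otimes A_{ij}$ evaluated between two matrices gives $\tfrac12 \Tr X \Tr Y - \tfrac12\Tr(X^T Y)$ type expressions). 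Applying Lemma~\ref{lem3} with the appropriate $X, Y$ read off from the two arcs of $\ell$ between occurrences $x$ and $y$ produces two terms: the $\tfrac12\Tr X \Tr Y$ piece becomes a product $W_{\times^1_{x,y}\ell} W_{\times^2_{x,y}\ell}$ of the two split loops, and the $-\tfrac12\Tr(g^{-1}X^Tg Y)$ piece, after using orthogonality $Q^T = Q^{-1}$ to turn transposes into path-reversals, becomes a single twisted loop $W_{\propto_{x,y}\ell}$. The bookkeeping of which of the four combinations $(\times, \propto)\times(\omega_x\omega_y = \pm 1)$ appears, and with which sign, is the main obstacle: one has to carefully track the orientation factors $\omega_x, \omega_y$ (they determine whether a factor appears as $g$ or $g^{-1} = g^T$, hence whether $\Tr(g^{-1}X^Tg Y)$ closes up into a twist or a negative-twist/merger-type loop), and confirm that the split is only well-defined — i.e. yields honest loops — in precisely the cases claimed ($\omega_x\omega_y = 1$ for $\times$ in the first sum, $\omega_x\omega_y = -1$ in the third). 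I would organize this into a short case analysis on $\operatorname{sgn}(\omega_x\omega_y)$, matching each resulting trace expression against the definitions of positive/negative split and positive/negative twist given in Section~2, and collecting signs; the $\omega_x\omega_y$ sign from the bilinear expansion combined with the sign in Lemma~\ref{lem3} yields the alternating $\mp$ pattern in the statement.
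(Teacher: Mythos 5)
Your plan is correct, and it reaches the result by a genuinely different route from the paper. The paper computes $\Delta_e W_\ell=\mathrm{div}\,\grad_e W_\ell=\Tr(P_g D(\grad_e W_\ell)P_g)$: it differentiates the already-projected gradient of Lemma \ref{lem1} in the ambient space and evaluates the tangentially projected trace term by term via Lemma \ref{lem3}; in particular the $-(N-1)mW_\ell$ term arises there from differentiating the projection part $-gQ^{\omega_x}_{\ell\setminus e_x}g$ with respect to its explicit $g$'s, and each ordered pair $x\neq y$ picks up its split/twist contribution in two halves (one from $DQ^{-\omega_x}_{\ell\setminus e_x}$, one from $-gDQ^{\omega_x}_{\ell\setminus e_x}g$). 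You instead use $\Delta_e=\sum_{i<j}X_{ij}^2$ along the one-parameter curves $t\mapsto Q_ee^{tA_{ij}}$, which is legitimate because the frame is left-invariant and the metric bi-invariant, so $\nabla_{X_{ij}}X_{ij}=\tfrac12[X_{ij},X_{ij}]=0$ — a step you should state explicitly, since it is exactly what lets you drop the first-order correction. Your decomposition then cleanly separates the diagonal contribution, where $\sum_{i<j}A_{ij}^2=-(N-1)I$ gives $-(N-1)mW_\ell$ in one line, from the cross terms, where the needed identity $\sum_{i<j}(A_{ij})_{bc}(A_{ij})_{da}=\delta_{bd}\delta_{ca}-\delta_{ab}\delta_{cd}$ is equivalent to Lemma \ref{lem3} after unwinding the metric normalization $\ev{X,Y}=\tfrac12\Tr(X^TY)$. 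Two cautions: (i) that normalization makes it easy to lose a factor of $2$ — check your constants against a toy case such as $f(g)=\Tr(g^2)$ on $SO(2)$, where $\Delta f=-4f$; (ii) the flow of $X_{ij}(g)=gA_{ij}$ is right multiplication by $e^{tA_{ij}}$ (so $g^{-1}\mapsto e^{-tA_{ij}}g^{-1}$), not left translation as you wrote, and getting this consistent is what makes the $\omega_x,\omega_y$ bookkeeping in the cross terms come out with the stated signs. The remaining case analysis you defer (matching $\Tr X\Tr Y$ to the splits and $\Tr(g^{-1}X^TgY)$ to the twists in each of the two orientation classes) is the same bookkeeping the paper does, and your outline of it is accurate.
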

\begin{proof}
Recall that for a vector field $X$, $\div X = \Tr \nabla X$ where $\nabla X$ is the covariant derivative of $X$. On a submanifold, 
the covariant derivative is the tangent projection of the ambient covariant derivative. Thus it suffices to compute 
$$\div X  = \Tr(P_{g} D X P_g)$$
Where $DX$ is the euclidean covariant derivative. Now, by \cref{lem1}:
$$D\grad_e W_{\ell} = \sum_{x\in C}DQ_{\ell\setminus e_x}^{-\omega_x} - \sum_{x\in C}D(g Q^{\omega_x}_{\ell\setminus e_x}g)$$
$$D\nabla_e W_{\ell}(H) = \sum_{x\neq y\in C} D_yQ^{-\omega_x}_{\ell\setminus e_x}(H) - \sum_{x\in C} \qty(H Q^{\omega_x}_{\ell\setminus e_x}g + g Q^{\omega_x}_{\ell\setminus e_x}H) - \sum_{x\neq y \in C} g D_yQ^{\omega_x}_{\ell\setminus e_x}(H)g $$
Now expanding further requires casework. In particular, in the first term, the $y$th occurrence of $e$ is $g$ if $\omega_x \omega_y = 1$ and is $g^{-1}$ otherwise. The opposite is true for the third term.
Recall that 
$$D(g\mapsto g^{-1})(H)  = -g^{-1}Hg^{-1}$$
Let $W_{\ell}(g_x\to H)$ denote the linear map formed by substituting $H$ for the $x$th occurrence of $e$. We thus have
\begin{equation}\label{eucderiv}
  \begin{split}
  D\nabla_e W_{\ell}(H)&= \sum_{x,y\in C, \omega_x\omega_y = -1} Q_{\ell\setminus e_x}^{-\omega_x}(g_y\mapsto H) - \sum_{x\neq y \in C, \omega_x\omega_y = 1} Q_{\ell\setminus e_x}^{-\omega_x}(g_{y}^{-1}\mapsto g^{-1}Hg^{-1})\\
  &-\sum_{x\in C}\qty(HQ^{\omega_x}_{\ell\setminus e_x}g + gQ^{\omega_x}_{\ell\setminus e_x}H) - \sum_{x\neq y\in C, \omega_x \omega_y = 1} g Q^{\omega_x}_{\ell\setminus e_x}(g_y\mapsto H)g \\&+ \sum_{x,y\in C, \omega_x\omega_y = -1}gQ^{\omega_x}_{\ell\setminus e_x}(g_y\mapsto g^{-1}Hg^{-1})g
  \end{split}
\end{equation}
We can now apply \cref{lem3}, as every term in this sum is an operator of the form $L_X R_Y$.
We can first consider 
$$\Tr(H\mapsto P_g Q^{-\omega_x}_{\ell\setminus e_x} (g_y \to H)P_g)$$
We can write $Q_{\ell\setminus e_x} = P_+ g^{\omega_y}P_-$. Then $Q_{\ell\setminus e_x}^{-\omega_x} = P_{-\omega_x}^{-\omega_x} g P^{-\omega_x}_{\omega_x}$ and so the trace is 
$$\frac{1}{2}\Tr(P_+)\Tr(P_-) - \frac{1}{2}\Tr(g^{-1}P_+^{-1} g P_-) = \frac{1}{2}W_{\times^1_{x,y} \ell}W_{\times^2_{x,y} \ell} - \frac{1}{2} W_{\propto_{x,y}\ell}$$

Similarly, for the case $\omega_x \omega_y = 1$, $Q_{\ell\setminus e_x}^{-\omega_x} = P^{-\omega_x}_{-\omega_x}g^{-1}P^{-\omega_x}_{\omega_x}$. So we compute the trace of $P^{-\omega_x}_{-\omega_x}g^{-1}Hg^{-1}P^{-\omega_x}_{\omega_x}$
Which equals 
$$\frac{1}{2}\Tr(P^{-\omega_x}_{-\omega_x}g^{-1})\Tr(g^{-1}P^{-\omega_x}_{\omega_x}) - \frac{1}{2} \Tr(P^{\omega_x}_{-\omega_x} P^{-\omega_x}_{\omega_x}) = \frac{1}{2}W_{\times^1_{x,y}\ell}W_{\times^2_{x,y}\ell} - \frac{1}{2}W_{\propto_{x,y}\ell}$$

Next we have $$\Tr(H\mapsto H Q^{\omega_x}_{\ell\setminus e_x} g + g Q^{\omega_x}_{\ell\setminus e_x}H)$$
By \cref{lem3}, 
\begin{align*}
&=\frac{N}{2}\Tr(Q^{\omega_x}_{\ell\setminus e_x}g) - \frac{1}{2}\Tr(Q^{\omega_x}_{\ell\setminus e_x}g) + \frac{N}{2}\Tr(Q^{\omega_x}_{\ell\setminus e_x}g) - \frac{1}{2}\Tr(Q^{\omega_x}_{\ell\setminus e_x}g)\\
&= (N-1) W_{\ell}
\end{align*}
We finally come to our last type of expression. 
$$\Tr(H\mapsto gQ^{\omega_x}_{\ell\setminus e_x} (g_y \mapsto H)g)$$
Writing $Q_{\ell\setminus e_x} = P_+ g^{\omega_y} P_-$, we have $gQ^{\omega_x}_{\ell\setminus e_x}g(H) = gP^{\omega_x}_{\omega_x} H P^{\omega_x}_{-\omega_x}g$
Thus, the trace is 
$$ = \frac{1}{2}\Tr(gP^{\omega_x}_{\omega_x})\Tr(P^{\omega_x}_{-\omega_x} g) - \frac{1}{2}\Tr(g^{-1} P^{-\omega_x}_{\omega_x} g^{-1}g P^{\omega_x}_{-\omega_x} g) = \frac{1}{2} W_{\times^1_{x,y}}W_{\times^2_{x,y}\ell} - \frac{1}{2}W_{\propto_{x,y}\ell}$$
For $\omega_x\omega_y = -1$, $gQ^{\omega_x}_{\ell\setminus e_x}g(H) =g P^{\omega_x}_{\omega_x} g^{-1}Hg^{-1}P^{\omega_x}_{-\omega_x}g $. Thus the final trace is 
$$ \frac{1}{2}\Tr(g P^{\omega_x}_{\omega_x}g^{-1})\Tr(g^{-1}P^{\omega_x}_{-\omega_x}g) - \frac{1}{2}\Tr(g^{-1}g P^{-\omega_x}_{\omega_x}g^{-1}g g^{-1}P^{\omega_x}_{-\omega_x} g)$$
$$ = \frac{1}{2}\Tr(P^{\omega_x}_{\omega_x})\Tr(P^{\omega_x}_{-\omega_x}) - \frac{1}{2}\Tr(P^{-\omega_x}_{\omega_x} g^{-1}P^{\omega_x}_{-\omega_x}g)= \frac{1}{2}W_{\times^1_{x,y}\ell}W_{\times^2_{x,y}\ell} - \frac{1}{2}W_{\propto_{x,y}\ell}$$
We can now insert these identities back into \cref{eucderiv}.
\begin{align*} 
  \Delta_e W_{\ell} &= \sum_{x,y\in C \omega_x\omega_y = -1}\qty(\frac{1}{2}W_{\times^1_{x,y} \ell}W_{\times^2_{x,y} \ell} - \frac{1}{2} W_{\propto_{x,y}\ell})- \sum_{x\neq y\in C, \omega_x \omega_y = 1} \qty( \frac{1}{2}W_{\times^1_{x,y}\ell}W_{\times^2_{x,y}\ell} - \frac{1}{2}W_{\propto_{x,y}\ell})\\
  &- \sum_{x\in C}(N-1)W_{\ell} - \sum_{x\neq y \in C, \omega_x \omega_y = 1}\qty( \frac{1}{2}W_{\times^1_{x,y}\ell}W_{\times^2_{x,y}\ell} - \frac{1}{2}W_{\propto_{x,y}\ell}) \\&+ \sum_{x,y\in C \omega_x \omega_y = -1}\qty( \frac{1}{2}W_{\times^1_{x,y}\ell}W_{\times^2_{x,y}\ell} - \frac{1}{2}W_{\propto_{x,y}\ell})\\
  &=-(N-1)m W_\ell +\sum_{x,y\in C, \omega_x \omega_y = -1}\qty(W_{\times^1_{x,y}\ell}W_{\times^2_{x,y}\ell} - W_{\propto_{x,y}\ell}) - \sum_{x\neq y \in C, \omega_x \omega_y = 1}\qty(W_{\times^1_{x,y}\ell}W_{\times^2_{x,y}\ell}-W_{\propto_{x,y}\ell}) 
\end{align*}
\end{proof}
\subsection{$SO(N)$ master loop equation}
\begin{proof}[Proof of \cref{thm1}.] Let $(\ell_1, \dots \ell_n)$ be a sequence of loops. We apply Laplace integration by parts to the Wilson loop correlation function:
\begin{align*} 
  &-\mathbb{E}[(\Delta_e W_{\ell_1})W_{\ell_2}\dots W_{\ell_n}]= Z^{-1}\int_{G^{E^+_\Lambda}} (\Delta_e W_{\ell_1})W_{\ell_2}\dots W_{\ell_n}\exp(\beta N \sum_{p\in\mathcal{P}^+_\Lambda} \Tr(Q_p))d\mu\\
  &= Z^{-1}\int_{G^{E^+_\Lambda}} \ev{\grad W_{\ell_1}, \grad\qty(W_{\ell_2}\dots W_{\ell_n}\exp(\beta N \sum_{p\in\mathcal{P}^+_\Lambda} \Tr(Q_p)))}d\mu\\
  &=\sum_{i = 2}^n \mathbb{E}\qty[\ev{\grad W_{\ell_1}, \grad W_{\ell_i}}\prod_{j\neq 1, i} W_{\ell_j}]+ \beta N \sum_{p\in \mathcal{P}^+_\Lambda}\mathbb{E}[\ev{\grad W_{\ell_1}, \grad W_{p}}W_{\ell_2}\dots W_{\ell_n}]\\ 
  &= \sum_{i=2}^{n}\sum_{x\in C_1, y\in C_i}\mathbb{E}\qty[W_{\ell_1\ominus_{x,y}\ell_i }\prod_{j\neq 1,i} W_{\ell_j}] -  \sum_{i=2}^{n}\sum_{x\in C_1, y\in C_i}\mathbb{E}\qty[W_{\ell_1\oplus_{x,y}\ell_i }\prod_{j\neq 1,i} W_{\ell_j}]\\
  &+\beta N  \sum_{i=2}^{n}\sum_{x\in C_1, y\in C_i}\mathbb{E}\qty[W_{\ell_1\ominus_{x,y}p }W_{\ell_2}\dots W_{\ell_n}] - \beta N  \sum_{i=2}^{n}\sum_{x\in C_1, y\in C_i}\mathbb{E}\qty[W_{\ell_1\oplus_{x,y}p }W_{\ell_2}\dots W_{\ell_n}]
\end{align*}
On the left hand side, we have 
\begin{align*}&-\mathbb{E}[(\Delta_e W_{\ell_1})W_{\ell_2}\dots W_{\ell_n}] \\&= (N-1)m \mathbb{E}[W_{\ell_1}\dots W_{\ell_n}] + \sum_{x\neq y \in C_1, \omega_x\omega_y = 1}\qty(\mathbb{E}[W_{\times^1_{x,y}\ell_1}W_{\times^2_{x,y}\ell_1}W_{\ell_2}\dots W_{\ell_n}] - \mathbb{E}[W_{\propto_{x,y}\ell_1}W_{\ell_2}\dots W_{\ell_n}])\\
  &- \sum_{x,y\in C_1, \omega_x \omega_y = -1}\qty(\mathbb{E}[W_{\times^1_{x,y}\ell_1}W_{\times^2_{x,y}\ell_1}W_{\ell_2}\dots W_{\ell_n}] - \mathbb{E}[W_{\propto_{x,y}\ell_1}W_{\ell_2}\dots W_{\ell_n}])
\end{align*}
Setting the two sides equal and rearranging terms gives the result: 
\begin{align*}
&(N-1)m\mathbb{E}[W_{\ell_1}\dots W_{\ell_n}] =\sum_{x,y\in C_1, \omega_x \omega_y = -1}\qty(\mathbb{E}[W_{\times^1_{x,y}\ell_1}W_{\times^2_{x,y}\ell_1}W_{\ell_2}\dots W_{\ell_n}] - \mathbb{E}[W_{\propto_{x,y}\ell_1}W_{\ell_2}\dots W_{\ell_n}]) \\
&- \sum_{x\neq y \in C_1, \omega_x\omega_y = 1}\qty(\mathbb{E}[W_{\times^1_{x,y}\ell_1}W_{\times^2_{x,y}\ell_1}W_{\ell_2}\dots W_{\ell_n}] - \mathbb{E}[W_{\propto_{x,y}\ell_1}W_{\ell_2}\dots W_{\ell_n}])\\
&= \sum_{i=2}^{n}\sum_{x\in C_1, y\in C_i}\mathbb{E}\qty[W_{\ell_1\ominus_{x,y}\ell_i }\prod_{j\neq 1,i} W_{\ell_j}] -  \sum_{i=2}^{n}\sum_{x\in C_1, y\in C_i}\mathbb{E}\qty[W_{\ell_1\oplus_{x,y}\ell_i }\prod_{j\neq 1,i} W_{\ell_j}]\\
  &+\beta N  \sum_{i=2}^{n}\sum_{x\in C_1, y\in C_i}\mathbb{E}\qty[W_{\ell_1\ominus_{x,y}p }W_{\ell_2}\dots W_{\ell_n}] - \beta N  \sum_{i=2}^{n}\sum_{x\in C_1, y\in C_i}\mathbb{E}\qty[W_{\ell_1\oplus_{x,y}p }W_{\ell_2}\dots W_{\ell_n}]
\end{align*}
This concludes the proof of the $SO(N)$ master loop equation.
\end{proof}
\section{$SU(N)$ and $U(N)$ analysis}
We now move on to the proof of \cref{thm2}

The procedure is mostly analogous. However unlike $SO(N)$, not every element of these groups is conjugate to its inverse. Thus, the orientations of Wilson loops will become more relevant in the analysis. This is reflected in the fact that Wilson loops are now complex valued. We will introduce a parameter $\eta$ that vanishes when $G=U(N)$ and is $1$ when $G=SU(N)$.
\subsection{Gradients of Wilson loops}
\begin{lemma}\label{gradlemma2} 
  Let $\ell$ be a  loop. 
 \begin{equation} \grad \Re W_\ell = 
  \sum_{x\in C_\ell} Q_{\ell\setminus e_x}^{-\omega_x} - Q_e Q_{\ell\setminus e_x}^{\omega_x}Q_e + \eta\frac{2i\omega_x}{N}\Im W_\ell Q_e
 \end{equation}
 
\end{lemma}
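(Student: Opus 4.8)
The plan is to mimic the proof of \cref{lem1}, but now carefully tracking the determinant constraint that distinguishes $SU(N)$ from $SO(N)$ (and from $U(N)$). As in \cref{lem1}, I would first extend $\Re W_\ell = \Re \Tr(Q_\ell)$ to a smooth function on a neighborhood of $G$ inside $GL_N(\mathbb C)$ (or $\mathbb R^{2N^2}$), compute its Euclidean gradient with respect to the edge $e$, and then apply the appropriate tangent-space projection. The Euclidean computation is essentially identical to the $SO(N)$ case: writing $W_\ell = \Tr(Q_{\ell\setminus e_x} g^{\omega_x})$ for each occurrence $x\in C_\ell$, the Euclidean differential of $\Re W_\ell$ pairs against $H$ via $\Re\Tr(Q_{\ell\setminus e_x}^{\omega_x}H)$, so relative to the metric $\ev{X,Y} = \tfrac12\Re\Tr(X^\dagger Y)$ the Euclidean gradient of the $x$-th term is $2 Q_{\ell\setminus e_x}^{-\omega_x}$ (using $g^{-1}=g^\dagger$ on $G$ and $\omega_x=\pm1$).

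The key new ingredient is the projection. For $U(N)$, $T_gU(N) = \{X : gX + X^\dagger g^{-1}=0\}$ and the orthogonal projection is $P_g(X) = \tfrac12 X - \tfrac12 g X^\dagger g$, exactly as in the $SO(N)$ case but with transpose replaced by conjugate-transpose; applying this to $\sum_x 2Q_{\ell\setminus e_x}^{-\omega_x}$ with $g = Q_e$ yields $\sum_{x\in C_\ell} Q_{\ell\setminus e_x}^{-\omega_x} - Q_e Q_{\ell\setminus e_x}^{\omega_x} Q_e$, which is the $\eta=0$ case of the claimed formula. For $SU(N)$ there is the extra linear constraint $\Tr(g^{-1}X)=0$, so $T_g SU(N)$ is the codimension-one subspace of $T_gU(N)$ obtained by further projecting off the component along the normal direction to that constraint. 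I would identify this extra normal direction explicitly: within $T_gU(N)$, the vector $ig$ spans the orthogonal complement of $T_gSU(N)$ (it satisfies $g(ig)+(ig)^\dagger g^{-1} = ig - ig = 0$ but has $\Tr(g^{-1}\cdot ig) = iN \neq 0$), and $\ev{ig, ig} = \tfrac12\Re\Tr((ig)^\dagger(ig)) = \tfrac12 N$. Hence the additional correction subtracted from the $U(N)$ gradient is $\dfrac{\ev{\grad^{U(N)}\Re W_\ell, ig}}{\ev{ig,ig}}\, ig$.

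So the heart of the computation is evaluating $\ev{\sum_x (Q_{\ell\setminus e_x}^{-\omega_x} - Q_e Q_{\ell\setminus e_x}^{\omega_x}Q_e),\ iQ_e}$. Using $\ev{X,Y}=\tfrac12\Re\Tr(X^\dagger Y)$ and $Q_e^{-1}=Q_e^\dagger$, each summand contributes $\tfrac12\Re\Tr\big((-i)Q_{\ell\setminus e_x}^{\omega_x}Q_e^{-1}\cdot Q_e\big) - \tfrac12\Re\Tr\big((-i)Q_e^{-1}Q_{\ell\setminus e_x}^{-\omega_x}Q_e^{-1}Q_e\big)$; both traces equal $\Tr(Q_{\ell\setminus e_x}^{\omega_x}Q_e)$ up to the direction of the loop, and one checks $\Tr(Q_{\ell\setminus e_x}^{\omega_x}Q_e) = \Tr(Q_\ell)$ when $\omega_x = 1$ and $= \Tr(Q_\ell^{-1}) = \overline{\Tr(Q_\ell)}$ when $\omega_x=-1$ (reversing the excised edge). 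Taking $\tfrac12\Re$ of the difference and summing, the real parts cancel in pairs by sign and what survives is the imaginary part, producing $\sum_x \omega_x \Im W_\ell$ up to a factor; dividing by $\ev{ig,ig}=N/2$ and multiplying by $ig = iQ_e$ gives precisely the correction term $\eta \sum_{x\in C_\ell}\frac{2i\omega_x}{N}\Im W_\ell\, Q_e$.

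I expect the main obstacle to be bookkeeping rather than conceptual: getting the constant and the sign in the $\Im W_\ell$ term right requires care with (i) the factor-of-two conventions in the metric and in the projection $P_g$, (ii) the fact that excising an edge and reading the complementary loop starting from the next edge effectively inverts (hence conjugates the trace of) the loop when $\omega_x=-1$, and (iii) confirming that $ig$ really is a unit-normalizable generator of $(T_gSU(N))^\perp$ inside $T_gU(N)$ and that the $U(N)$ projection of $2\sum_x Q_{\ell\setminus e_x}^{-\omega_x}$ already lands in $T_gU(N)$ so that only this one further correction is needed. Once these conventions are pinned down the identity falls out directly; I would present the $U(N)$ ($\eta=0$) case first as a near-verbatim repeat of \cref{lem1}, then append the one-line orthogonal-complement correction for $SU(N)$.
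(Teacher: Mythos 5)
Your proposal is correct and follows essentially the same route as the paper: compute the Euclidean differential $d\Re W_\ell(H)=\sum_x\Re\Tr(Q_{\ell\setminus e_x}^{\omega_x}H)$, read off the ambient gradient $2Q_{\ell\setminus e_x}^{-\omega_x}$, and apply the orthogonal projection onto the tangent space. The only cosmetic difference is that the paper uses the single projection formula $P_gX=\tfrac12X-\tfrac12 gX^\dagger g-\tfrac{\eta}{2N}\Im\Tr(g^{-1}X-X^\dagger g^{-1})g$ in one step, whereas you factor it as the $U(N)$ projection followed by a rank-one correction along $iQ_e$; your computation of that correction (using $\Im\Tr(Q_{\ell\setminus e_x}^{\omega_x}Q_e)=\omega_x\Im W_\ell$ and $\ev{iQ_e,iQ_e}=N/2$) lands on the same term $\eta\frac{2i\omega_x}{N}\Im W_\ell\,Q_e$.
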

\begin{proof}
 As before, the differential is
  $$d\Re W_\ell(H) = \sum_{x\in C_\ell}d_x \Re W_\ell(H) = \sum_{x\in C_\ell} \Re \Tr(Q^{\omega_x}_{\ell\setminus e_x}H)$$
  Thus, 
  $$\grad^{euc}_e \Re W_\ell = \sum_{x\in C_\ell} 2Q_{\ell\setminus e_x}^{-\omega_x}$$
  Recalling that the tangent projection is 
  $$P_g X = \frac{1}{2}X - \frac{1}{2}gX^\dagger g -\frac{\eta}{2N} \Im \Tr(g^{-1}X - X^\dagger g^{-1})g$$

  We thus get 
  \begin{align*}
  \grad \Re W_\ell(g) &= \sum_{x\in C_\ell} Q_{\ell\setminus e_x}^{-\omega_x} - g Q_{\ell\setminus e_x}^{\omega_x}g +\eta\frac{2i}{N} \Im \Tr(Q_{\ell\setminus e_x}^{\omega_x}g)g\\
  &=\sum_{x\in C_\ell} Q_{\ell\setminus e_x}^{-\omega_x} - g Q_{\ell\setminus e_x}^{\omega_x}g +\eta\frac{2i\omega_x}{N} \Im W_\ell g
  \end{align*}
\end{proof}

\begin{lemma}\label{gradlemma3}
Let $\ell$ be a loop. 
$$\grad \Im W_\ell =
\sum_{x\in C_\ell} \omega_x i Q_{\ell\setminus e_x}^{-\omega_x} + iQ_e \omega_x Q_{\ell\setminus e_x}^{\omega_x} Q_e- \eta\frac{2i\omega_x}{N} \Re W_\ell Q_e $$
\end{lemma}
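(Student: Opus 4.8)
The plan is to mimic, for the imaginary part, the three-step argument that proves \cref{gradlemma2}: (i) differentiate $\Im W_\ell$ in the ambient matrix space one occurrence of $e$ at a time to obtain a Euclidean differential; (ii) convert this to the ambient gradient using the metric $\ev{X,Y}=\tfrac12\Re\Tr(X^\dagger Y)$; (iii) apply the tangent projection $P_g$ onto $T_{Q_e}G$ and simplify. For step (i), fix an occurrence $x\in C_\ell$ and write $W_\ell=\Tr(Q_{\ell\setminus e_x}g^{\omega_x})$ with $g$ standing for the edge variable $Q_e$. Here the one genuinely new phenomenon, compared with the $SO(N)$ bookkeeping of \cref{lem1}, appears: the trace of a unitary matrix is not invariant under inversion but satisfies $\Tr(U^{-1})=\overline{\Tr U}$, so $\Im\Tr(Q_{\ell\setminus e_x}^{\omega_x}g)=\omega_x\,\Im W_\ell$ on $G$. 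Consequently $g\mapsto\omega_x\,\Im\Tr(Q_{\ell\setminus e_x}^{\omega_x}g)$ is a \emph{linear} extension of $\Im W_\ell$ (in the $x$-th slot) off $G$, and differentiating and summing over $C_\ell$ yields $d\,\Im W_\ell(H)=\sum_{x\in C_\ell}\omega_x\,\Im\Tr(Q_{\ell\setminus e_x}^{\omega_x}H)$; note the extra factor $\omega_x$, which was absent in the real-part computation.

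For step (ii) I would write $\Im z=\Re(-iz)$ and use that the Euclidean gradient of $g\mapsto\Re\Tr(Bg)$ is $2B^\dagger$; with $B=-i\omega_x Q_{\ell\setminus e_x}^{\omega_x}$ and $\overline{-i\omega_x}=i\omega_x$ this gives $\grad^{euc}_e\Im W_\ell=\sum_{x\in C_\ell}2i\omega_x Q_{\ell\setminus e_x}^{-\omega_x}$. For step (iii) I apply the tangent projection $P_g$ (with $g=Q_e$) exactly as in the proof of \cref{gradlemma2}: its $\tfrac12(\cdot)$ part contributes $i\omega_x Q_{\ell\setminus e_x}^{-\omega_x}$; its $-\tfrac12\,g(\cdot)^\dagger g$ part contributes $+\,i\omega_x Q_e Q_{\ell\setminus e_x}^{\omega_x}Q_e$, where the conjugation in $(\cdot)^\dagger$ now acts on the scalar $2i\omega_x$, which is exactly what makes this term enter with a $+$ sign in contrast with \cref{gradlemma2}; and its $SU(N)$ correction term, evaluated with $\Tr(Q_{\ell\setminus e_x}^{\omega_x}g)=\Re W_\ell+i\omega_x\Im W_\ell$ (again from $\Tr(U^{-1})=\overline{\Tr U}$), contributes $-\eta\tfrac{2i\omega_x}{N}\Re W_\ell\,Q_e$. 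Summing the three contributions over $x\in C_\ell$ gives the stated identity.

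The step I expect to demand the most care is the sign-and-conjugation bookkeeping: correctly tracking the factor $\omega_x$ through $\Tr(U^{-1})=\overline{\Tr U}$, both in the extension used in step (i) and in the $\eta$-term of step (iii), and remembering that the Hermitian adjoint inside $P_g$ must be applied to the full complex coefficient $2i\omega_x$, not merely to the matrix $Q_{\ell\setminus e_x}^{-\omega_x}$. No analytic input beyond \cref{gradlemma2} is needed; structurally the argument parallels that proof line by line, with the orientation signs being the only substantive new feature.
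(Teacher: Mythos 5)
Your proposal is correct and follows essentially the same route as the paper: compute the Euclidean differential $d\,\Im W_\ell(H)=\sum_{x}\omega_x\Im\Tr(Q_{\ell\setminus e_x}^{\omega_x}H)$ (the orientation factor coming from $\Tr(U^{-1})=\overline{\Tr U}$, which the paper packages as $\Im\Tr(Ag^{\omega}B)=\omega\,\Im\Tr((BA)^{\omega}g)$), read off the Euclidean gradient $\sum_x 2i\omega_x Q_{\ell\setminus e_x}^{-\omega_x}$, and apply the tangent projection $P_g$ term by term. Your added remarks on where the conjugation acts (on the full coefficient $2i\omega_x$, producing the $+$ sign on the second term) only make explicit what the paper leaves implicit.
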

\begin{proof}
Recall that $\Im W_\ell = \Re(-iW_\ell)$. 
Note that $$\Im \Tr(A g^\omega B) = \Im \Tr((BA)g^\omega) = \omega \Im \Tr((BA)^\omega g)$$
Thus, 
$$d\Im W_\ell(H) = \sum_{x\in C_\ell}  \omega_x\Re\Tr(-iQ_{\ell\setminus e_x}^{\omega_x}H)$$
and the euclidean gradient is $$\sum_{x\in C_\ell} 2i\omega_x Q^{-\omega_x}_{\ell\setminus e_x}$$
Applying the tangent projection again finally gives for 
$$\grad \Im W_\ell = \sum_{x\in C_\ell} \omega_x i Q_{\ell\setminus e_x}^{-\omega_x} + i g \omega_x Q_{\ell\setminus e_x}^{\omega_x}g - \eta\frac{2i\omega_x}{N} \Re W_\ell g $$

\end{proof}

\begin{lemma}\label{gradiplemma}
Let $\ell_1$ and $\ell_2$ be loops. Then 
$$\ev{\grad W_{\ell_1}, \grad W_{\ell_2}} = \sum_{x\in C_1, y\in C_2, \omega_x \omega_y = -1} 2 W_{\ell_1 \ominus_{x,y}\ell_2} - \sum_{\omega_x \omega_y = 1}2W_{\ell_1\oplus_{x,y}\ell_2} + \eta\frac{2t_1 t_2}{N}W_{\ell_1}W_{\ell_2}$$

\end{lemma}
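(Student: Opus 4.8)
The plan is to compute $\grad W_{\ell_1}$ and $\grad W_{\ell_2}$ in closed form and then evaluate the complex-bilinear metric on the pair, reducing everything to traces of holonomies that we then recognize as merged Wilson loops. The first step is to combine \cref{gradlemma2} and \cref{gradlemma3}: since $\grad W_\ell = \grad \Re W_\ell + i\,\grad \Im W_\ell$, adding $i$ times the formula of \cref{gradlemma3} to that of \cref{gradlemma2} makes the explicit factors of $i$ cancel, and the $x$-th summand collapses to $(1-\omega_x)Q_{\ell\setminus e_x}^{-\omega_x} - (1+\omega_x)Q_e Q_{\ell\setminus e_x}^{\omega_x}Q_e + \tfrac{2\eta\omega_x}{N}W_\ell Q_e$. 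Since one of $1\mp\omega_x$ annihilates one of the first two terms, this reads
\[
\grad W_\ell = 2\sum_{x\in C_\ell,\ \omega_x=-1} Q_{\ell\setminus e_x} \;-\; 2\sum_{x\in C_\ell,\ \omega_x=1} Q_e Q_{\ell\setminus e_x} Q_e \;+\; \frac{2\eta t_\ell}{N} W_\ell Q_e,
\]
where $t_\ell = \sum_{x\in C_\ell}\omega_x$ and $W_\ell = \Tr Q_\ell$. I will write $(\grad W_\ell)_x$ for the $x$-th summand and keep the per-occurrence form $\tfrac{2\eta\omega_x}{N}W_\ell Q_e$ of its last piece.

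Second, I would record the form of the metric to use: for a real tangent vector $X = gA$ with $A$ in the Lie algebra one has $g^\dagger = g^{-1}$, hence $X^\dagger = -g^{-1}Xg^{-1}$, and $\Tr(g^{-1}Xg^{-1}Y)$ is real for real tangent $X,Y$, so $\ev{X,Y} = -\tfrac12\Tr(g^{-1}Xg^{-1}Y)$. The right-hand side is $\mathbb{C}$-bilinear, hence is the extension of the metric to the complexified tangent space (where both gradients live) and agrees with the definition via real and imaginary parts. Substituting the closed formula with $g = Q_e$ turns the left-hand side of the lemma into
\[
\ev{\grad W_{\ell_1},\grad W_{\ell_2}} = -\tfrac12 \sum_{x\in C_1,\ y\in C_2} \Tr\!\big(g^{-1}(\grad W_{\ell_1})_x\, g^{-1}(\grad W_{\ell_2})_y\big),
\]
a finite sum of products of three-term expressions indexed by $(x,y)\in C_1\times C_2$.

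The bulk of the work is a four-way case analysis in $(\omega_x,\omega_y)$. For the ``main'' products (non-$\eta$ part against non-$\eta$ part) I would use the cyclic identity $\Tr(Q_e^{\omega_x}Q_{\ell\setminus e_x}) = W_\ell$ (rebase $\ell$ so its $x$-th occurrence of $e$ comes first), together with the analogous identification of $\Tr(Q_{\ell_1\setminus e_x}Q_e^{\pm}Q_{\ell_2\setminus e_y}Q_e^{\pm})$ with the holonomy of a merged loop: when $\omega_x\omega_y=1$ the two copies of $e$ recombine, giving $4\,W_{\ell_1\oplus_{x,y}\ell_2}$, and when $\omega_x\omega_y=-1$ they cancel, giving $-4\,W_{\ell_1\ominus_{x,y}\ell_2}$, matching the definitions of the positive and negative mergers. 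After the overall $-\tfrac12$ these produce the first two sums of the statement. The three ``correction'' products involving the $\tfrac{2\eta\omega_x}{N}W_{\ell_1}Q_e$ and $\tfrac{2\eta\omega_y}{N}W_{\ell_2}Q_e$ pieces are handled via the identity $\Tr(g^{-1}m) = -2\omega\,W_\ell$ for the non-$\eta$ part $m$ of $(\grad W_\ell)_x$ (again the cyclic identity): the two mixed terms each sum to $+\tfrac{2\eta t_1 t_2}{N}W_{\ell_1}W_{\ell_2}$, while the pure term gives $\tfrac{4\eta^2\omega_x\omega_y}{N^2}W_{\ell_1}W_{\ell_2}\Tr I = \tfrac{4\eta^2\omega_x\omega_y}{N}W_{\ell_1}W_{\ell_2}$, which sums (using $\eta^2=\eta$ and the $-\tfrac12$) to $-\tfrac{2\eta t_1 t_2}{N}W_{\ell_1}W_{\ell_2}$. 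The net of the three is $+\tfrac{2\eta t_1 t_2}{N}W_{\ell_1}W_{\ell_2}$, and since the whole computation is uniform in $\eta\in\{0,1\}$ it covers $U(N)$ as well.

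I expect the main obstacle to be precisely this last accounting of the $\eta$-terms: each of the nine products per $(x,y)$ must be tracked with the correct surviving powers of $g=Q_e$ and the correct sign, and one must check that the three $\eta$-contributions combine — via $\eta^2=\eta$ and $\Tr I=N$ — to the coefficient $2$ rather than, say, $4$ or $-2$. The other delicate point is matching the surviving main traces to $W_{\ell_1\oplus_{x,y}\ell_2}$ versus $W_{\ell_1\ominus_{x,y}\ell_2}$ with the correct orientation conventions, but this is routine once the merger definitions are unwound.
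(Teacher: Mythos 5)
Your proposal is correct, and it organizes the computation differently from the paper. The paper keeps the gradient split as $\grad\Re W_\ell + i\grad\Im W_\ell$ and evaluates four separate real inner products ($\ev{\grad f^R,\grad f'^R}$, $\ev{\grad f^I,\grad f'^I}$, and the two cross terms), which forces it to compare $W_{\ell_2\oplus_{x,y}\ell_1}$ with $W_{\ell_1\oplus_{x,y}\ell_2}$ and track the orientation reversals that occur under swapping the two loops in a merger. You instead collapse the two gradient lemmas into the single closed form $(1-\omega_x)Q_{\ell\setminus e_x}^{-\omega_x}-(1+\omega_x)Q_eQ_{\ell\setminus e_x}^{\omega_x}Q_e+\tfrac{2\eta\omega_x}{N}W_\ell Q_e$ (which checks out: the explicit $i$'s cancel and the two $\eta$-pieces recombine into $W_\ell$), and you observe that the paper's complexified metric is the $\mathbb{C}$-bilinear form $-\tfrac12\Tr(g^{-1}Xg^{-1}Y)$ — valid since this form agrees with $\tfrac12\Re\Tr(X^\dagger Y)$ on real tangent vectors and two bilinear forms agreeing there agree on the complexification. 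This buys you a manifestly symmetric trace expression with only four $(\omega_x,\omega_y)$ cases for the main terms and three $\eta$-corrections, eliminating the real/imaginary fourfold split and the merger-reversal bookkeeping entirely; I verified your sign accounting ($+2W_{\ominus}$ for $\omega_x\omega_y=-1$, $-2W_{\oplus}$ for $\omega_x\omega_y=1$, and $\tfrac{4\eta t_1t_2}{N}-\tfrac{2\eta t_1t_2}{N}=\tfrac{2\eta t_1t_2}{N}$ from the mixed and pure $\eta$-terms) and it lands on the stated identity. The cost is that you must justify the bilinear-trace rewriting of the metric, which the paper never needs since it works only with the real metric on real vector fields; that justification is short and you supply it.
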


\begin{proof}
Note that because $\Re \Tr(X)= \Re \Tr(X^\dagger)$, the first part of the gradient has algebra identical to that of the $SO(N)$ case. Thus 
$$\ev{\grad f^R, \grad f'^R} = \sum_{x\in C_1,y\in C_2 }\Re(W_{\ell_1\ominus_{x,y} \ell_2}) - \Re(W_{\ell_1\oplus_{x,y} \ell_2}) + \eta\ev{\frac{2i\omega_x}{N}\Im W_{\ell_1 }g,Q_{\ell_2\setminus e_y}^{-\omega_y} - g Q_{\ell\setminus e_y}^{\omega_y}g}$$
$$ + \eta\ev{Q_{\ell_1\setminus e_x}^{-\omega_x} - gQ_{\ell_1\setminus e_x}^{\omega_x} g, \frac{2i\omega_y}{N} \Im W_{\ell_2}g} + \eta\ev{\frac{2i\omega_x}{N} \Im W_{\ell_1} g, \frac{2i\omega_y}{N}\Im W_{\ell_2}g}$$

The first term after the mergers equals 
$$\eta\frac{\omega_x}{N} \Im W_{\ell_1}\Im \Tr(g^{-1}Q_{\ell_2\setminus e_y}^{-\omega_y} - Q_{\ell_2\setminus e_y}^{\omega_y} g) = -\eta\frac{2\omega_x\omega_y}{N}\Im W_{\ell_1}\Im W_{\ell_2}$$
Similarly, the second term after mergers is 
$$\eta\frac{\omega_y}{N}\Im \Tr(Q_{\ell_2\setminus e_y}^{\omega_y} g)\Re \Tr(i Q_{\ell_1\setminus e_x}^{\omega_x} g - i g^{-1}Q_{\ell_1\setminus e_x}^{-\omega_x}) = -\eta\frac{2\omega_x\omega_y}{N}\Im W_{\ell_1}W_{\ell_2}$$
Finally, the last term is 
$$ \eta\frac{4\omega_x\omega_y}{N^2} \Im W_{\ell_1}\Im W_{\ell_2}\ev{g,g} = \frac{2\omega_x\omega_y}{N} \Im W_{\ell_1}W_{\ell_2}$$
In total, 
$$\ev{\grad f^R, \grad f'^R} = \sum_{x\in C_\ell}\Re(W_{\ell_1\ominus_{x,y} \ell_2}) - \Re(W_{\ell_1\oplus_{x,y} \ell_2}) -\eta\frac{2\omega_x\omega_y}{N}\Im W_{\ell_1}W_{\ell_2}$$

Now for $f^I$. The first term is almost algebraically identical (the $i$s cancel and $\omega$s factor out). But the negative sign is gone. So 

$$\ev{\grad f^I, \grad f'^I} = \sum_{x\in C_1, y\in C_2}\omega_x \omega_y\Re (W_{\ell_1\ominus_{x,y} \ell_2}) + \omega_x\omega_y\Re(W_{\ell_1\oplus_{x,y}\ell_2}) -\eta\frac{2\omega_x\omega_y}{N}\Re W_{\ell_1}\Re W_{\ell_2}$$
$$ -\eta\frac{2\omega_x\omega_y}{N}\Re W_{\ell_2}\Re W_{\ell_2} +\eta\frac{2\omega_x\omega_y}{N}\Re W_{\ell_1}\Re W_{\ell_2}$$
$$ =\sum_{x\in C_1, y\in C_2}\omega_x\omega_y \Re W_{\ell_1\ominus_{x,y} \ell_2} + \omega_x\omega_y\Re W_{\ell_1\oplus_{x,y} \ell_2} - \eta\frac{2\omega_x\omega_y}{N}\Re W_{\ell_1}\Re W_{\ell_2}$$

In total then we have 
$$\Re\ev{\grad W_{\ell_1}, \grad W_{\ell_2}}$$ $$ = \sum_{x\in C_1, y\in C_2}(1-\omega_x\omega_y)\Re(W_{\ell_1\ominus_{x,y} \ell_2}) - (1+\omega_x\omega_y)\Re W_{\ell_1\oplus_{x,y} \ell_2} +\eta\frac{2\omega_x\omega_y}{N}\qty(\Re W_{\ell_1}\Re W_{\ell_2} - \eta\Im W_{\ell_1}\Im W_{\ell_2})$$
$$ = \sum_{x\in C_1, y\in C_2}(1-\omega_x\omega_y)\Re(W_{\ell_1\ominus_{x,y} \ell_2}) - (1+\omega_x\omega_y)\Re W_{\ell_1\oplus_{x,y} \ell_2} +\eta\frac{2\omega_x\omega_y}{N}\Re(W_{\ell_1}W_{\ell_2})$$

Now for the imaginary part. 
$$\ev{Q_{\ell_1\setminus e_x}^{-\omega_x} -gQ_{\ell_1}^{\omega_x} g, i\omega_y Q_{\ell_2\setminus e_y}^{-\omega_y} + i \omega_y g Q_{\ell_2\setminus e_y}^{\omega_y} g}=$$ $$\frac{\omega_y}{2}\Re \text{Tr}(iQ_{\ell_1\setminus e_x}^{\omega_x} Q_{\ell_2\setminus e_Y}^{-\omega_y} + iQ_{\ell_1\setminus e_x}^{\omega_x} gQ_{\ell_2\setminus e_y}^{\omega_y}g-i g^{-1}Q_{\ell_1\setminus e_x}^{-\omega_x}g^{-1} Q_{\ell_2\setminus e_y}^{-\omega_y} $$ $$- i g^{-1}Q_{\ell_1\setminus e_x}^{-\omega_x} Q_{\ell_2\setminus e_y}^{\omega_y}g)$$
$$=-\frac{\omega_y}{2}\Im\Tr(Q_{\ell_1\setminus e_x}^{\omega_x} Q_{\ell_2\setminus e_y}^{-\omega_y} - Q_{\ell_1\setminus e_x}^{-\omega_x} Q_{\ell_2\setminus e_y}^{\omega_y})$$ $$ - \frac{\omega_y}{2} \Im \Tr(Q_{\ell_1\setminus e_x}^{\omega_x} gQ_{\ell_2\setminus e_y}^{\omega_y}g - g^{-1}Q_{\ell_1\setminus e_x}^{-\omega_x}g^{-1}Q_{\ell_2\setminus e_Y}^{-\omega_y})$$
$$  =\omega_y \Im \Tr(Q_{\ell_1\setminus e_x}^{-\omega_x}Q_{\ell_2\setminus e_y}^{\omega_y}) -\omega_y \Im \Tr(Q_{\ell_1\setminus e_x}^{\omega_x} gQ_{\ell_2\setminus e_y}^{\omega_y}g)$$

Recall that in the definition of $\ell_1\oplus\ell_2$ or $\ell_1\ominus \ell_2$, the orientation of the first term does not change. As a result, 
$$ =-\omega_x\omega_y \Im W_{\ell_1\ominus_{x,y} \ell_2} - \omega_x\omega_y \Im W_{\ell_1\oplus_{x,y} \ell_2}$$
Now for the remaining terms,
$$\eta\ev{Q_{\ell_1\setminus e_x}^{-\omega_x} -g Q_{\ell_1\setminus e_x}^{\omega_x} g, -\frac{2i\omega_y}{N}\Re \Tr(Q_{\ell_2\setminus e_y}^{\omega_y}g)} = \eta\frac{\omega_y}{N}\Re\Tr(Q_{\ell_2\setminus e_y}^{\omega_y}g) \Im \Tr(Q_{\ell_1\setminus e_x}^{\omega_x} g - g^{-1}Q_{\ell_1\setminus e_x}^{-\omega_x})$$
$$ =\eta\frac{2\omega_y}{N} \Re \Tr(Q_{\ell_2\setminus e_y}^{\omega_y}g)\Im \Tr(Q_{\ell_1\setminus e_x}^{\omega_x} g) = \eta\frac{2\omega_x\omega_y}{N}\Im W_{\ell_1}\Re W_{\ell_2}$$
Similarly, 
$$\eta\ev{\frac{2i}{N}\Im \Tr(Q_{\ell\setminus e_x}^{\omega_x} g)g, i\omega_xQ_{\ell_2\setminus e_y}^{-\omega_y} +i\omega_x gQ_{\ell_2\setminus e_y}^{\omega_y}g} = \eta\frac{2\omega_y}{N} \Im \Tr(Q_{\ell_1\setminus e_x}^{\omega_x} g)\Re \Tr(Q_{\ell_2\setminus e_y}^{\omega_y}g)$$
$$=\eta\frac{2\omega_x\omega_y}{N} \Im W_{\ell_1}\Re W_{\ell_2}$$
And finally, 
$$\eta\ev{\frac{2i}{N} \Im \Tr(Q_{\ell_1\setminus e_x}^{\omega_x} g)g, -\frac{2i\omega_y}{N}\Re\Tr(Q_{\ell_2\setminus e_y}^{\omega_y}g)g} =- \eta\frac{2\omega_y}{N} \Im \Tr(Q_{\ell_1\setminus e_x}^{\omega_x} g)\Re \Tr(Q_{\ell_2\setminus e_y}^{\omega_y}g)$$
$$=-\frac{2\omega_x\omega_y}{N} \eta\Im W_{\ell_1}\Re W_{\ell_2}$$
In total, 
$$\ev{\grad f^R, \grad f'^I} = \sum_{x\in C_1, y\in C_2}-\omega_x\omega_y \Im W_{\ell_1\ominus_{x,y} \ell_2} - \omega_x\omega_y \Im W_{\ell_1\oplus_{x,y} \ell_2}+\eta\frac{2\omega_x \omega_y}{N} \Im W_{\ell_1}\Re W_{\ell_2}$$
By symmetry, $\ev{\grad f'^R, \grad f^I}$ is the same as $\ev{\grad f^R, \grad f'^I}$ with $\ell_1\to \ell_2$ and $\ell_2\to \ell_1$. So 
$$\ev{\grad f'^R, \grad f^I} =\sum_{x\in C_1, y\in C_2} -\omega_x\omega_y \Im W_{\ell_2\ominus_{x,y} \ell_1} - \omega_x\omega_y \Im W_{\ell_2\oplus_{x,y} \ell_1} + \eta\frac{2\omega_x\omega_y}{N}\Im W_{\ell_2}\Re W_{\ell_1}$$

Now we need to examine the relationship between $W_{\ell_1\oplus \ell_2}$ and $W_{\ell_2\oplus \ell_1}$ and similarly for $\ominus$. For a negative merger, an orientation reversal occurs if $\omega\omega' = 1$. Otherwise it doesn't happen.
Similarly for a positive merger, an orientation reversal happens if $\omega\omega' = -1$. Thus we get 
$$ =\sum_{x\in C_1, y\in C_2}\Im W_{\ell_1\ominus_{x,y} \ell_2} - \Im W_{\ell_1 \oplus_{x,y} \ell_2} +\eta\frac{2\omega_x\omega_y}{N}\Im W_{\ell_2}\Re W_{\ell_1}$$
And so, 
$$\Im \ev{\grad W_{\ell_1}, \grad W_{\ell_2}} =\sum_{x\in C_1, y\in C_2} (1-\omega_x\omega_y)\Im W_{\ell_1\ominus_{x,y} \ell_2} -(\omega_x\omega_y +1)\Im W_{\ell_1\oplus_{x,y}\ell_2} +\eta \frac{2\omega_x\omega_y}{N}(\Im W_{\ell_1}\Re W_{\ell_2} + \Re W_{\ell_1}\Im W_{\ell_2})$$
$$ =\sum_{x\in C_1, y\in C_2}(1-\omega_x\omega_y) \Im W_{\ell_1\ominus_{x,y} \ell_2} - (1+\omega_x \omega_y)\Im W_{\ell_1\oplus_{x,y} \ell_2} +\eta\frac{2\omega_x\omega_y}{N}\Im(W_{\ell_1}W_{\ell_2})$$
Thus we can conclude:
$$\ev{\grad_e W_{\ell_1}, \grad_e W_{\ell_2}} =\sum_{x\in C_1, y\in C_2} (1-\omega_x\omega_y)W_{\ell_1\ominus_{x,y} \ell_2} - (1+\omega_x\omega_y) W_{\ell_1\oplus_{x,y} \ell_2} + \eta\frac{2\omega_x\omega_y}{N} W_{\ell_1}W_{\ell_2}$$
$$= \sum_{\omega_x \omega_y = -1} 2W_{\ell_1\ominus_{x,y} \ell_2} - \sum_{\omega_x \omega_y  =1}2W_{\ell_1\oplus_{x,y} \ell_2} +\eta\sum_{x,y} \frac{2\omega_x\omega_y}{N} W_{\ell_1}W_{\ell_2}$$
Note that
$$\sum_{x\in C_1, y\in C_2}\omega_x \omega_y = \qty(\abs{A_1}\abs{A_2} + \abs{B_1}\abs{B_2}) - \qty(\abs{A_1}\abs{B_2} + \abs{B_1}\abs{A_2})$$
$$=\qty(\abs{A_1}-\abs{B_1})\qty(\abs{A_2}-\abs{B_2}) = t_1 t_2$$
Completing the proof.
\end{proof}

This final lemma is required to account for terms from the measure.
\begin{lemma}\label{gradmeasure}
Let $\ell_1$ and $\ell_2$ be loops. Then 
$$\ev{\grad W_{\ell_1}, \Re \grad W_{\ell_2}} = \sum_{x\in C_1,y\in C_2} W_{\ell_1\ominus_{x,y}\ell_2} - W_{\ell_1\oplus_{x,y} \ell_2} + \eta\frac{t_1 t_2}{N} \qty(W_{\ell_1}W_{\ell_2^{-1}} - W_{\ell_1}W_{\ell_2})$$

\end{lemma}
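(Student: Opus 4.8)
The plan is to obtain this identity as a variant of \cref{gradiplemma}, reusing the intermediate computations in its proof rather than redoing them. First I would note that, by the way the gradient was extended to complex-valued functions, $\grad W_{\ell_2} = \grad\Re W_{\ell_2} + i\grad\Im W_{\ell_2}$ with both summands genuine (real) vector fields, so $\Re\grad W_{\ell_2} = \grad\Re W_{\ell_2}$, whose explicit form is \cref{gradlemma2}. Writing $\grad W_{\ell_1} = \grad\Re W_{\ell_1} + i\grad\Im W_{\ell_1}$ as well and using the $\mathbb{C}$-bilinearity of the extended metric,
\[
\ev{\grad W_{\ell_1}, \Re\grad W_{\ell_2}} = \ev{\grad\Re W_{\ell_1}, \grad\Re W_{\ell_2}} + i\,\ev{\grad\Im W_{\ell_1}, \grad\Re W_{\ell_2}}.
\]
Both inner products on the right already appear inside the proof of \cref{gradiplemma} (with $f = W_{\ell_1}$, $f' = W_{\ell_2}$, $f^{R} = \Re f$, $f^{I} = \Im f$): the first is the quantity written there as $\ev{\grad f^{R}, \grad {f'}^{R}}$, and the second equals $\ev{\grad {f'}^{R}, \grad f^{I}}$ by symmetry of the metric, which is evaluated in that proof after the orientation-reversal discussion distinguishing positive and negative mergers. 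So the next step is simply to substitute those two expressions and add them, with the factor $i$ on the second.

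The remaining step is bookkeeping. The merger contributions are $\Re(W_{\ell_1\ominus_{x,y}\ell_2} - W_{\ell_1\oplus_{x,y}\ell_2})$ from the first piece and $i\,\Im(W_{\ell_1\ominus_{x,y}\ell_2} - W_{\ell_1\oplus_{x,y}\ell_2})$ from the second, and these recombine to $W_{\ell_1\ominus_{x,y}\ell_2} - W_{\ell_1\oplus_{x,y}\ell_2}$ summed over \emph{all} $x\in C_1$, $y\in C_2$ --- i.e. exactly the $SO(N)$ answer of \cref{lem2}. In particular the $(1\pm\omega_x\omega_y)$ weights that appeared in \cref{gradiplemma} do not show up here, since those arose precisely from combining the ${f'}^{R}$ and ${f'}^{I}$ parts of the second gradient, and we keep only ${f'}^{R}$. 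For the $\eta$-terms, the first piece contributes $-\eta\frac{2\omega_x\omega_y}{N}\Im W_{\ell_1}\Im W_{\ell_2}$ and the second (after multiplying by $i$) contributes $i\,\eta\frac{2\omega_x\omega_y}{N}\Re W_{\ell_1}\Im W_{\ell_2}$; these combine to $\eta\frac{2\omega_x\omega_y}{N}\, i\, W_{\ell_1}\Im W_{\ell_2}$. Using $\sum_{x\in C_1,\,y\in C_2}\omega_x\omega_y = t_1 t_2$ (as computed in \cref{gradiplemma}) together with $W_{\ell^{-1}} = \overline{W_\ell}$, so that $2i\,\Im W_{\ell_2} = W_{\ell_2} - W_{\ell_2^{-1}}$, turns the $\eta$-contribution into $\eta\frac{t_1 t_2}{N}\,(W_{\ell_1}W_{\ell_2^{-1}} - W_{\ell_1}W_{\ell_2})$, up to the merger orientation conventions, which is the stated form.

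I do not expect a genuine obstacle here: the content is already inside \cref{gradiplemma}, and this lemma is ``the same computation with only the real part of the second gradient retained.'' The only points requiring care are the sign and orientation-reversal bookkeeping in the merger terms, the observation that the $(1\pm\omega_x\omega_y)$ weights drop out, and the fact that $W_\ell$ here denotes the complex trace $\Tr Q_\ell$, which is what makes $W_{\ell^{-1}} = \overline{W_\ell}$, and hence the final rewriting of the $\eta$-term, legitimate. A slightly slicker but not obviously shorter alternative is to write $\ev{\grad W_{\ell_1}, \Re\grad W_{\ell_2}} = \ev{\grad W_{\ell_1}, \grad W_{\ell_2}} - i\,\ev{\grad W_{\ell_1}, \grad\Im W_{\ell_2}}$ and invoke \cref{gradiplemma} for the first term, but the second term still needs a fresh pairing computation, so returning to the $\ev{\grad f^{R}, \grad {f'}^{R}}$-type building blocks is cleaner.
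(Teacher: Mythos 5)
Your proposal follows essentially the same route as the paper's own proof: the same decomposition $\ev{\grad W_{\ell_1},\Re\grad W_{\ell_2}}=\ev{\grad\Re W_{\ell_1},\grad\Re W_{\ell_2}}+i\ev{\grad\Im W_{\ell_1},\grad\Re W_{\ell_2}}$, the same reuse of the two building blocks already computed inside the proof of \cref{gradiplemma}, the same observation that the merger contributions recombine into the unweighted sum of \cref{lem2}, and the same final use of $2i\Im W_{\ell_2}=W_{\ell_2}-W_{\ell_2^{-1}}$ together with $\sum_{x,y}\omega_x\omega_y=t_1t_2$.

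The one step that does not close is the sign of the $\eta$-term, and you cannot defer it to ``merger orientation conventions,'' because that term involves no mergers. From the two pieces you invoke, namely $-\eta\frac{2\omega_x\omega_y}{N}\Im W_{\ell_1}\Im W_{\ell_2}$ and, after multiplying by $i$, $+i\,\eta\frac{2\omega_x\omega_y}{N}\Re W_{\ell_1}\Im W_{\ell_2}$, the combination is $+\eta\frac{2\omega_x\omega_y}{N}\,i\,W_{\ell_1}\Im W_{\ell_2}=\eta\frac{\omega_x\omega_y}{N}\qty(W_{\ell_1}W_{\ell_2}-W_{\ell_1}W_{\ell_2^{-1}})$, i.e.\ after summing, $\eta\frac{t_1t_2}{N}\qty(W_{\ell_1}W_{\ell_2}-W_{\ell_1}W_{\ell_2^{-1}})$ --- the \emph{negative} of the $\eta$-term in the statement. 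So your closing assertion that this ``turns into'' the stated form does not follow from your own intermediate expressions. For what it is worth, the paper's write-up has exactly the same tension: its two displayed pieces agree with yours, but an extra overall minus sign appears when it combines them, and that is how it lands on the stated sign. A complete proof must resolve this discrepancy one way or the other (either the $\eta$-signs coming from \cref{gradlemma2}, \cref{gradlemma3} and \cref{gradiplemma}, or the $\eta$-sign in the statement of \cref{gradmeasure}, needs a flip), rather than assert agreement with the statement; apart from this bookkeeping point your argument is the paper's argument.
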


\begin{proof}
Most of the algebra has already been carried out. 
This inner product is 
$$\ev{\grad \Re W_{\ell_1}, \grad \Im W_{\ell_1}} + i\ev{\grad \Im W_{\ell_1}, \grad \Re W_{\ell_1}}$$
The first term is, as we've computed before, 
$$=\sum_{x\in C_1, y\in C_2}\Re(W_{\ell_1 \ominus_{x,y} \ell_2}) - \Re(W_{\ell_1\oplus_{x,y} \ell_2}) - \eta\frac{2\omega_x\omega_y}{N}\Im W_{\ell_1}\Im W_{\ell_2}$$
The next term is 
$$=\sum_{x\in C_1, y\in C_2} \Im W_{\ell_1\ominus_{x,y} \ell_2} - \Im W_{\ell_1\oplus_{x,y} \ell_2} +\eta\frac{2\omega_x\omega_y}{N}\Re W_{\ell_1}\Im W_{\ell_2}$$
Putting them together, this gives
$$\ev{\grad_e W_\ell, \grad_e \Re W_\ell} = \sum_{x\in C_1,y\in C_2}W_{\ell_1\ominus_{x,y} \ell_2} - W_{\ell_1\oplus-{x,y} \ell_2} - \eta\frac{2\omega_x\omega_y}{N} (i\Re W_{\ell_1} \Im W_{\ell_2} - \Im W_{\ell_1}\Im W_{\ell_2})$$

That last term can be simplified: 
$$= (i\Re W_{\ell_1} - \Im W_{\ell_1}) \Im W_{\ell_2} = i W_{\ell_1} \Im W_{\ell_1} = \frac{1}{2}W_{\ell_1}W_{\ell_2} - \frac{1}{2}W_{\ell_1} W_{-\ell_2}$$
So in total, 
$$\ev{\grad_e W_{\ell_1}, \grad_e \Re W_{\ell_2}} = \sum_{x\in C_1, y\in C_2} W_{\ell_1\ominus_{x,y} \ell_2} - W_{\ell_1 \oplus_{x,y} \ell_2} +\eta\frac{\omega_x\omega_y}{N} W_{\ell_1}W_{-\ell_2} - \eta\frac{\omega_x\omega_y}{N} W_{\ell_1}W_{\ell_2}$$
$$ = \sum_{x\in C_1, y\in C_2} W_{\ell_1\ominus_{x,y} \ell_2} - W_{\ell_1 \oplus_{x,y} \ell_2} +\frac{t_1 t_2}{N} W_{\ell_1}W_{-\ell_2} - \eta\frac{t_1 t_2}{N} W_{\ell_1}W_{\ell_2}$$
\end{proof}

\subsection{Laplacian of Wilson loops}
\begin{lemma}\label{suntrace}
Let $L_X$ and $R_Y$ denote left and right-multiplication, respectively. Then 
$$\Tr(P_g L_X R_Y P_g) = \Re(\Tr X \Tr Y) - \frac{\eta}{N}\Re\Tr(g^{-1}XgY)$$
\end{lemma}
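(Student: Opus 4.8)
The plan is to mirror the proof of \cref{lem3}: realize the operator trace as a sum over an orthonormal frame of $T_gG$, use left-invariance of the metric to replace that frame by $g$ times an orthonormal basis of $\mathfrak{g}$, and then evaluate the resulting sum with a completeness (``Fierz'') identity for $\mathfrak{g}$.

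First I would record the reduction step. Since $P_g$ is the orthogonal projection onto $T_gG$ and $L_XR_Y$ is the real-linear map $V\mapsto XVY$ of the (real) vector space of complex $N\times N$ matrices, for any orthonormal frame $\{V_a\}$ of $T_gG$ one has $\Tr(P_gL_XR_YP_g)=\sum_a\ev{V_a,XV_aY}=\sum_a\tfrac12\Re\Tr(V_a^\dagger XV_aY)$; the $\Re$ is forced here (in contrast to the $SO(N)$ case) because the ambient space is complex matrices regarded as a real space. Left-invariance, $\ev{gA,gB}=\tfrac12\Re\Tr(A^\dagger g^{-1}gB)=\ev{A,B}$, then lets me take $V_a=gT_a$ for $\{T_a\}$ an orthonormal basis of $\mathfrak{g}$, turning the trace into $\sum_a\tfrac12\Re\Tr\!\big(T_a^\dagger\,(g^{-1}Xg)\,T_aY\big)$.

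Next I would handle $U(N)$ (the $\eta=0$ case). The key input is that $M\mapsto\sum_aT_a\ev{T_a,M}$ is the orthogonal projection of $\mathbb{C}^{N\times N}$ onto the anti-Hermitian matrices, i.e. $M\mapsto\tfrac12(M-M^\dagger)$; reading off matrix entries, and using that each $T_a$ is anti-Hermitian along with the $\tfrac12$ in the metric, should give the completeness relation $\sum_a(T_a)_{ij}(T_a)_{kl}=-2\,\delta_{il}\delta_{jk}$. Substituting this into $\sum_a\Tr(T_a^\dagger(g^{-1}Xg)T_aY)$, after writing $(T_a^\dagger)_{pq}=-(T_a)_{pq}$, should collapse every index sum and leave $2\Tr(g^{-1}Xg)\Tr Y=2\Tr X\,\Tr Y$, hence $\Tr(P_gL_XR_YP_g)=\Re(\Tr X\,\Tr Y)$. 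For $SU(N)$ ($\eta=1$) I would use the orthogonal splitting $\mathfrak{u}(N)=\mathfrak{su}(N)\oplus\mathbb{R}(iI)$, whose complement is spanned by the unit vector $Z_0=i\sqrt{2/N}\,I$; choosing the $U(N)$-basis to contain $Z_0$, the $SU(N)$ sum equals the $U(N)$ sum minus the single term $\ev{gZ_0,X\,gZ_0\,Y}=\tfrac12\cdot\tfrac2N\Re\Tr(g^{-1}XgY)=\tfrac1N\Re\Tr(g^{-1}XgY)$, producing exactly the claimed $-\tfrac{\eta}{N}\Re\Tr(g^{-1}XgY)$.

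The one genuinely delicate part is the constant-chasing: pinning down the factor $\tfrac12$ from the metric, the anti-Hermiticity sign flips, and the normalization $\ev{iI,iI}=N/2$, so that the $-2$ in the Fierz identity and the $1/N$ in the $SU(N)$ correction emerge with precisely the right coefficients. Once these are fixed the remaining work is a routine index computation in the style of \cref{lem3}, so I expect no conceptual obstacle beyond that.
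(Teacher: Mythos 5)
Your proposal is correct and follows essentially the same route as the paper: both realize $\Tr(P_gL_XR_YP_g)$ as a sum over the orthonormal frame $g(e_ie_j^T-e_je_i^T)$, $ig(e_ie_j^T+e_je_i^T)$, $i\sqrt{2}\,ge_ie_i^T$ of $g\mathfrak{u}(N)$ to get $\Re(\Tr X\Tr Y)$, and then subtract the single contribution of the unit normal $i\sqrt{2/N}\,g$ spanning $g\mathfrak{u}(N)\ominus g\mathfrak{su}(N)$ to produce the $-\frac{\eta}{N}\Re\Tr(g^{-1}XgY)$ correction. Packaging the index computation as the completeness relation $\sum_a(T_a)_{ij}(T_a)_{kl}=-2\delta_{il}\delta_{jk}$ is only a cosmetic difference from the paper's term-by-term expansion, and your constants all check out.
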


\begin{proof}
We first compute the trace on $U(N)$. An orthonormal basis is $g(e_i e_j^T-e_je_i^T)$ $i<j$, $ig(e_ie_j^T+e_je_i^T)$, $i<j$, and $g\sqrt{2}e_i e_i^T$.
\begin{align*}
&\Tr(P_g L_X L_Y P_g) = \sum_{i<j} \ev{e_i e_j^T - e_j e_i^T, g^{-1}X g(e_i e_j^T - e_j e_i^T)Y}+\sum_{i<j} \ev{e_i e_j^T + e_j e_i^T, g^{-1}X g(e_i e_j^T + e_j e_i^T)Y}  \\&+ \sum_{i}2\ev{e_i e_i^T, g^{-1}X g e_i e_i^T}\\
&= \sum_{i<j}2\qty(\ev{e_i e_j^T, g^{-1}Xg e_i e_j^T Y } + \ev{e_j e_i^T, g^{-1}Xg e_j e_i^T}) + \sum_i 2 \ev{e_i e_i^T, g^{-1}Xg e_i e_i^T}\\
&= \sum_{ij}2\ev{e_i e_j^T, g^{-1}Xg e_i e_j^T} - 2\sum_i \ev{e_i e_i^T, g^{-1}Xg e_i e_i^T} + 2\sum_i \ev{e_i e_i, g^{-1}Xg e_i e_i^T} = \sum_{ij}\Re(g^{-1}Xg)_{ii}Y_{jj}\\
 &= \Re \Tr X \Tr Y
\end{align*}
Now, $g\mathfrak{u}(N) = g\mathfrak{su}(N)\oplus i \mathbb{R}g$. 
Thus the trace on $SU(N)$ is 
$$\Tr_{g\mathfrak{u}(N)}(L_X L_Y) - \ev{\sqrt{\frac{2}{N}}ig, \sqrt{\frac{2}{N}}iXgY} =\Tr_{g\mathfrak{u}(N)}(L_X L_Y) - \frac{1}{N}\Re\Tr(g^{-1}XgY) $$ $$ = \Re \Tr X \Tr Y - \frac{1}{N} \Re\Tr(g^{-1}XgY)$$
\end{proof}

\begin{lemma}\label{sunlaplace}
Let $W_\ell$ be a Wilson loop and $\Delta_e$ the Laplace-Beltrami operator at the edge $e$. Then 
$$\Delta_e W_\ell = -\sum_{x\in C_1, y\in C_2, \omega_x\omega_y = 1}2 W_{\times^1_{x,y}\ell}W_{\times^2_{x,y}\ell} +\sum_{\omega_x\omega_y = -1} 2 W_{\times^1_{x,y}\ell}W_{\times^2_{x,y}\ell} -\qty(2mN -\frac{2\eta t^2}{N})W_\ell$$
\end{lemma}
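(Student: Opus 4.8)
The proof follows the template of \cref{lem4}, with the $SO(N)$ ingredients replaced by their $SU(N)$/$U(N)$ counterparts and a careful accounting of the $\eta/N$ corrections. Write $\Delta_e W_\ell = \div\grad_e W_\ell$, and recall that for a submanifold of matrix space the intrinsic divergence of a tangent field $X$ is $\div X = \Tr(P_g\,DX\,P_g)$, where $D$ is the flat covariant derivative and $P_g$ is the tangent projection from the proof of \cref{gradlemma2}, namely $P_g X = \tfrac12 X - \tfrac12 gX^\dagger g - \tfrac{\eta}{2N}\Im\Tr(g^{-1}X - X^\dagger g^{-1})g$. Since $\grad_e W_\ell = \grad_e\Re W_\ell + i\,\grad_e\Im W_\ell$ by definition, and $\div$ is $\mathbb{C}$-linear, it suffices to compute $\Delta_e\Re W_\ell := \div\grad_e\Re W_\ell$ and $\Delta_e\Im W_\ell$ separately and add. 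I would carry out the first in detail and obtain the second by the symmetric argument, interchanging $\Re$ and $\Im$ and inserting the signs dictated by \cref{gradlemma2} and \cref{gradlemma3}. Here $m = \abs{C}$ and $t = \sum_{x\in C}\omega_x = \abs{A}-\abs{B}$.

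For $\Delta_e\Re W_\ell$, differentiate $\grad_e\Re W_\ell = \sum_{x\in C}\big(Q_{\ell\setminus e_x}^{-\omega_x} - gQ_{\ell\setminus e_x}^{\omega_x}g + \eta\tfrac{2i\omega_x}{N}\,\Im W_\ell\,g\big)$ term by term, exactly as for \eqref{eucderiv}, using $D(g\mapsto g^{-1})(H) = -g^{-1}Hg^{-1}$ and the product rule, and then apply $\Tr(P_g\,\cdot\,P_g)$. The contributions fall into three types. (a) The $SO(N)$-type terms: differentiating a second occurrence $e_y$ inside $Q_{\ell\setminus e_x}^{\pm\omega_x}$ (casework on $\omega_x\omega_y=\pm1$), together with the two terms $H\mapsto HQ_{\ell\setminus e_x}^{\omega_x}g + gQ_{\ell\setminus e_x}^{\omega_x}H$ coming from hitting the explicit $g$'s, are all of the form $L_X R_Y$; \cref{suntrace} turns them into split terms $W_{\times^1_{x,y}\ell}W_{\times^2_{x,y}\ell}$, a "diagonal" contribution $(2N-\tfrac{2\eta}{N})\Re W_\ell$ per $x\in C$ (using $\Re\Tr(Q^{\omega_x}_{\ell\setminus e_x}g) = \Re W_\ell$, which follows from unitarity), and $\tfrac{\eta}{N}$-order twist-type contributions arising from the $-\tfrac{\eta}{N}\Re\Tr(g^{-1}XgY)$ piece of \cref{suntrace}. (b) Terms from differentiating the new summand $\eta\tfrac{2i\omega_x}{N}\Im W_\ell\,g$: one uses the Euclidean gradient of $\Im W_\ell$ from \cref{gradlemma3} for the $D\Im W_\ell$ part, and gets $\eta\tfrac{2i\omega_x}{N}\Im W_\ell\,H$ from the explicit $g$; these are $O(\eta/N)$. (c) The $O(\eta/N)$ correction coming from the extra term of $P_g$ acting on the $SO(N)$-type operators. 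I would then record $\Delta_e\Re W_\ell$ as its $SO(N)$-part, $-2mN\Re W_\ell$ plus the split terms, together with a bookkept $O(\eta/N)$ remainder, and do the same for $\Delta_e\Im W_\ell$.

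Assembling $\Delta_e W_\ell = \Delta_e\Re W_\ell + i\,\Delta_e\Im W_\ell$, the $SO(N)$-parts recombine exactly as the real and imaginary parts did in the proof of \cref{gradiplemma}: the split terms acquire the factor $(1-\omega_x\omega_y)$ and reassemble into complex Wilson loops, giving $-2\sum_{\omega_x\omega_y=1}W_{\times^1_{x,y}\ell}W_{\times^2_{x,y}\ell}+2\sum_{\omega_x\omega_y=-1}W_{\times^1_{x,y}\ell}W_{\times^2_{x,y}\ell}$, and the diagonal term gives $-2mN\,W_\ell$ (note that the twist terms, which in \cref{suntrace} appear only with the prefactor $\eta/N$, drop out of the $SO(N)$-part entirely, which is why no $W_{\propto_{x,y}\ell}$ survives). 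It then remains to show that all the $O(\eta/N)$ remainders — from (a), (b), (c) and their imaginary analogues — collapse to exactly $+\tfrac{2\eta t^2}{N}W_\ell$; here one uses $\Im\Tr(Q^{\omega_x}_{\ell\setminus e_x}g) = \omega_x\Im W_\ell$, the analogous real identity, and $\sum_{x,y\in C}\omega_x\omega_y = \big(\sum_{x\in C}\omega_x\big)^2 = t^2$ as in \cref{gradiplemma}. This last step is the main obstacle: the $\eta/N$ contributions come from four distinct sources and individually produce $W_{\propto_{x,y}\ell}$-type and $\Re W_\ell\,\Im W_\ell$-type expressions, so one must verify that after the $\Re + i\,\Im$ recombination everything but $\tfrac{2\eta t^2}{N}W_\ell$ cancels. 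Keeping $\Delta_e\Re W_\ell$ and $\Delta_e\Im W_\ell$ separate until the very end, and only then forming complex Wilson loops, is what makes this cancellation transparent.
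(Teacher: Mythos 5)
Your plan follows the paper's route (split into $\Delta_e\Re W_\ell$ and $\Delta_e\Im W_\ell$, differentiate the gradient of \cref{gradlemma2} term by term, apply \cref{suntrace}, recombine), but it stops exactly where the lemma's real content lies. You explicitly defer "the main obstacle" — showing that all the $\eta/N$ remainders collapse to $\tfrac{2\eta t^2}{N}W_\ell$ — to an unexecuted verification, and the mechanism you anticipate for it is not the one that works. Two concrete points are missing. First, the terms obtained by differentiating the new summand $\eta\tfrac{2i\omega_x}{N}\Im W_\ell\, Q_e$ (your source (b)) never enter any trace at all: their images are real multiples of $iH$ and of $iQ_e$, which lie in the normal bundle, so $P_g(\cdot)P_g$ annihilates them before \cref{suntrace} is invoked. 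Second, the $-\tfrac{\eta}{N}\Re\Tr(g^{-1}XgY)$ correction in \cref{suntrace} does not produce twist-type terms for the operators at hand: in each case the $g$'s and $g^{-1}$'s collapse and, by unitarity ($\Re\Tr M=\Re\Tr M^\dagger$ and $M^{-\dagger}=M$), the correction is exactly $-\tfrac{\eta}{N}\Re W_\ell$ per $(x,y)$ pair. Consequently there is no delicate cancellation of $W_{\propto_{x,y}\ell}$ or $\Re W_\ell\,\Im W_\ell$ cross terms at the $\Re+i\,\Im$ recombination stage; the coefficient $-(2mN-\tfrac{2\eta t^2}{N})$ is already present in $\Delta_e\Re W_\ell$ alone, via the count $\abs{A}(\abs{A}-1)+\abs{B}(\abs{B}-1)-2\abs{A}\abs{B}=t^2-m$ combined with the diagonal contribution $(2N-\tfrac{2\eta}{N})$ per occurrence. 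Without these two observations your outline asserts the conclusion rather than proving it.

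Two smaller issues: your claimed factor $(1-\omega_x\omega_y)$ for the split terms is not what the computation yields (it would kill the $\omega_x\omega_y=1$ splits rather than give them coefficient $-2$); and obtaining $\Delta_e\Im W_\ell$ "by the symmetric argument" is not automatic, since \cref{gradlemma3} has a different sign structure — the paper instead reuses the real-part computation verbatim after multiplying the matrix on one non-$e$ edge by $-i$, which only uses unitarity of the intermediate matrices. Also, your sources (a) and (c) are the same correction counted twice: the $-\tfrac{\eta}{N}$ piece of \cref{suntrace} \emph{is} the effect of the extra ($SU(N)$) term in $P_g$.
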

\begin{proof}
The laplacian does not involve any complex multiplication. We can therefore separately compute $\Delta_e \Re W_{\ell}$ and $\Delta_e \Im W_{\ell}$. 

$$\grad_e \Re W_{\ell} = \sum_{x\in C}\grad^x  \Re W_{\ell}$$
so 
$$\nabla_H^{euc} \grad_e \Re W_{\ell} = \sum_{x,y\in C} d_y \grad^x \Re W_{\ell}(H)$$

Now, recall: 
$$\grad_e \Re W_{\ell} =\sum_{x\in C_\ell} Q_{\ell\setminus e_x}^{-\omega_x} - Q_e Q_{\ell\setminus e_x}^{\omega_x} Q_e +\eta\frac{2i\omega_x}{N}\Im W_{\ell}Q_e$$

Taking the differential and accounting for orientations as in the $SO(N)$ proof, 
\begin{align*}
d\grad_e \Re W_\ell(H) &= \sum_{x,y\in C_\ell, \omega_x\omega_y = -1}Q_{\ell\setminus e_x}^{-\omega_x}(g_y \to H) -\sum_{x\neq y \in C_\ell, \omega_x\omega_y = 1}Q_{\ell\setminus e_x}^{-\omega_x} (g_y \to g^{-1}Hg^{-1}) \\ &- HQ_{\ell\setminus e_x}^{\omega_x} Q_e - Q_e Q^{\omega_x}_{\ell\setminus e_x}H  - \sum_{x\neq y \in C_\ell, \omega_x\omega_y = 1} Q_{\ell\setminus e_x}^{\omega_x}(g_y \to H) \\ &+ \sum_{x,y\in C_\ell, \omega_x\omega_y = -1} Q_e Q_{\ell\setminus e_x}^{\omega_x}(g_y\to g^{-1}H g^{-1}) + \eta\frac{2i\omega_x}{N}\Im W_\ell H +\eta\frac{2i\omega_x}{N} d\Im W_\ell Q_e
\end{align*}
For that last term, 
$$d \Im \Tr W_{\ell}(H)  = \sum_{y\in C}\omega_y\Im \Tr(Q_{\ell\setminus e_y}^{\omega_y}H)$$

Putting it all together, 
$$d\grad_e \Re W_{\ell}(H)= -\sum_{\omega_x\omega_y=1} Q_{\ell\setminus e_x}^{-\omega_x}(Q_y \to Q_e^{-1}HQ_e^{-1}) + \sum_{\omega_x\omega_y = -1} Q_{\ell\setminus e_x}^{-\omega_x}(Q_y\to H)$$
$$ -\sum_{x\in C}H Q_{\ell\setminus e_x}^{\omega_x}Q_e + Q_eQ^{\omega_x}_{\ell\setminus e_x} H  -\sum_{\omega_x\omega_y=1} Q_e Q_{\ell\setminus e_x}^{\omega_x}(Q_y\to H) Q_e$$
$$ +\sum_{\omega_x\omega_y = -1} Q_e Q_{\ell\setminus e_x}^{\omega_x} (Q_e \to Q_e^{-1}H Q_{e}^{-1})Q_e + \eta\sum_{x\in C} \frac{2i\omega_x}{N}H \Im W_{\ell} +\eta\sum_{x\in C, y\in C}\frac{2i\omega_x \omega_y}{N}Q_e  \Im \Tr(Q^{\omega_y}_{\ell\setminus e_y} H)$$
The last two terms can be dropped, as both of their images lie in the normal bundle.

We now compute the trace (recall \cref{suntrace}).
Once again recall that if $Q_{\ell\setminus e_x} = P_+ g^{\omega_y}P_-$, then 
$$Q_{\ell\setminus e_x}^{\pm\omega_x} = P_{\pm\omega_x}^{\pm \omega_x}g^{\pm \omega_x \omega_y} P_{\mp \omega_x}^{\pm \omega_x}$$
First: 

\begin{align*}&\sum_{\omega_x \omega_y = 1}\Tr Q_{\ell\setminus e_x}^{-\omega_x}(Q_y\to Q_e^{-1}HQ_e^{-1} ) = \sum_{\omega_x\omega_y = 1} \Tr P_{-\omega_x}^{-\omega_x} Q_e^{-1}HQ_{e}^{-1} P_{\omega_x}^{-\omega_x} \\ &= \sum_{\omega_x\omega_y = 1} \Re(\Tr(P^{-\omega_x}_{-\omega_x}Q_{e}^{-1})\Tr(Q_e^{-1}P^{-\omega_x}_{\omega_x})) - \frac{\eta}{N}\Re\Tr(g^{-1}(P^{-\omega_x}_{-\omega_x}g^{-1})g g^{-1}P^{-\omega_x}_{\omega_x})\\
&=\sum_{\omega_x\omega_y = 1}\Re(W_{\times^1_{x,y}\ell} W_{\times^2_{x,y}\ell}) - \frac{\eta}{N}\Re \Tr(g^{-1}P^{-\omega_x}_{-\omega_x} g^{-1}P^{-\omega_x}_{\omega_x})=\sum_{\omega_x\omega_y = 1} \Re(W_{\times^1_{x,y}\ell} W_{\times^2_{x,y}\ell}) - \frac{\eta}{N} \Re W_\ell
\end{align*}

Next, we have 
\begin{align*}
&\sum_{\omega_x \omega_y = -1}\Tr Q_{\ell\setminus e_x}^{-\omega_x}(Q_y\to H) = \sum_{\omega_x\omega_y = -1} \Tr(P^{-\omega_x}_{-\omega_x} H P^{-\omega_x}_{\omega_x}) \\ &= \sum_{\omega_x\omega_y = -1}\Re(\Tr(P^{-\omega_x}_{-\omega_x})\Tr(P^{-\omega_x}_{\omega_z})) - \frac{\eta}{N}\Re\Tr(g^{-1} P^{-\omega_x}_{-\omega_x} g P^{-\omega_x}_{\omega_x})\\
&= \sum_{\omega_x\omega_y = -1}\Re(W_{\times^1_{x,y}\ell}W_{\times^2_{x,y}\ell}) - \frac{\eta}{N}\Re W_\ell
\end{align*}

Next, 
\begin{align*}
&\Tr(HQ^{\omega_x}_{\ell\setminus e_x} Q_e + Q_e Q^{\omega_x}_{\ell\setminus e_x}H) =2N\Re \Tr(Q^{-\omega_x}_{\ell\setminus e_x}Q_e) - \frac{2\eta}{N}\Re\Tr(Q^{\omega_x}_{\ell\setminus e_x}Q_e) = \qty(2N-\frac{2\eta}{N})\Re W_\ell
\end{align*}

Next, 
\begin{align*}
&\sum_{\omega_x\omega_y = 1}\Tr Q_e Q^{\omega_x}_{\ell\setminus e_x}(Q_y \to H)Q_e = \Tr(Q_eP^{\omega_x}_{\omega_x} H P^{\omega_x}_{-\omega_x}Q_e)\\
&=\Re\Tr(Q_e P^{\omega_x}_{\omega_x})\Tr(P^{\omega_x}_{\omega_x}Q_e) - \frac{\eta}{N}\Re(Q_e^{-1}Q_e P^{\omega_x}_{\omega_x}Q_e P^{\omega_x}_{-\omega_x}Q_e)\\&= \Re W_{\times^1_{x,y}\ell}W_{\times^2_{x,y}\ell} - \frac{\eta}{N} \Re W_\ell
\end{align*}

Finally, 
\begin{align*}
&\sum_{\omega_x\omega_y = -1} \Tr(Q_e Q_{\ell\setminus e_x}^{\omega_x}(Q_y \to Q_e^{-1}HQ_e^{-1}))= \sum_{\omega_x\omega_y = -1}\Tr(Q_e P^{\omega_x}_{\omega_x} Q_e^{-1} H Q_e^{-1} P^{\omega_x}_{-\omega_x}Q_e)\\
&=\sum_{\omega_x\omega_y = -1} \Re(\Tr(Q_e P^{\omega_x}_{\omega_x}Q_e^{-1})\Tr(Q_e P^{\omega_x}_{-\omega_x}Q_e^{-1})) - \frac{\eta}{N}\Re \Tr(Q_e^{-1} Q_e P^{\omega_x}_{\omega_x} Q_e^{-1}Q_e Q_e^{-1}P^{\omega_x}_{-\omega_x}Q_e)\\
&=\sum_{\omega_x\omega_y = -1}\Re W_{\times^1_{x,y}\ell}W_{\times^2_{x,y}\ell} - \frac{\eta}{N} \Re \Tr(P^{\omega_x}_{\omega_x}Q_e^{-1} P^{\omega_x}_{-\omega_x} Q_e)  = \sum_{\omega_x\omega_y = -1} \Re W_{\times^1_{x,y}\ell}W_{\times^2_{x,y}\ell} - \frac{\eta}{N}\Re W_\ell
\end{align*}

Inserting these identities back in, 
\begin{align*}
\Delta_e W_\ell &= -\qty(\sum_{\omega_x\omega_y = 1} \Re(W_{\times^1_{x,y}\ell} W_{\times^2_{x,y}\ell}) - \frac{\eta}{N} \Re W_\ell) + \qty(\sum_{\omega_x\omega_y = -1}\Re(W_{\times^1_{x,y}\ell}W_{\times^2_{x,y}\ell}) - \frac{\eta}{N}\Re W_\ell)\\
&-\sum_{x}\qty(2N-\frac{2\eta}{N})\Re W_\ell -\sum_{\omega_x\omega_y = 1}\qty(\Re W_{\times^1_{x,y}\ell}W_{\times^2_{x,y}\ell} - \frac{\eta}{N} \Re W_\ell)\\
&+\qty(\sum_{\omega_x\omega_y = -1} \Re W_{\times^1_{x,y}\ell}W_{\times^2_{x,y}\ell} - \frac{\eta}{N}\Re W_\ell)\\ 
&= -\sum_{x\in C}\qty(2N-\frac{2\eta}{N})\Re W_\ell -2\sum_{\omega_x \omega_y = 1}\Re(W_{\times^1_{x,y}\ell}W_{\times^2_{x,y}\ell}) + 2\sum_{\omega_x\omega_y = -1} \Re(W_{\times^1_{x,y}\ell}W_{\times^2_{x,y}\ell}) \\&+ \frac{2\eta}{N}\bigg(\sum_{x\neq y, \omega_x \omega_y = 1}1-\sum_{\omega_x \omega_y = -1}1\bigg)\Re W_\ell
\end{align*}

The last term can be simplified as follows: $\sum_{\omega_x \omega_y = 1, x\neq y} = \abs{A}(\abs{A}-1) + \abs{B}(\abs{B}-1)$ and $\sum_{\omega_x\omega_y = -1} = 2\abs{A}\abs{B}$
So that term reduces to
$$\frac{2\eta}{N}\qty[\abs{A}^2 + \abs{B}^2 - 2\abs{A}\abs{B} -\abs{A}-\abs{B}]\Re W_\ell =\frac{2\eta}{N}(t^2-m) $$ where $t= \abs{A}-\abs{B}$ and $m=\abs{A}+\abs{B}$. $A,B$ are the number of $e$s and $-e$s respectively.
So in total, 
$$\Delta_e\Re  W_{\ell} = -\sum_{\omega_x\omega_y = 1}2\Re W_{\ell\times^1_{x,y}}W_{\ell\times^2_{x,y}} + \sum_{\omega_x \omega_y = -1} 2\Re W_{\ell\times^1_{x,y}}W_{\ell\times^2_{x,y}} - \qty(2mN  - \frac{2\eta t^2}{N})\Re W_{\ell}$$
Now for the imaginary part of the laplacian, notice that no part of the algebra actually used that the matrices between occurrences of the edge $e$ belong to $SU(N)$. We only needed that they are unitary.
Thus, pick any edge that is not $\pm e$, and replace the matrix there with $-iA$. This gives 
$$\Delta_e \Im W_\ell = -2\sum_{\omega_x\omega_y = 1}\Im W_{\ell\times^1_{x,y}}W_{\ell\times^2_{x,y}} + \sum_{\omega_x\omega_y = -1} 2\Im W_{\ell\times^1_{x,y}}W_{\ell\times^2_{x,y}} - \qty(2mN - \frac{2\eta t^2}{N})\Im W_{\ell}$$
And so, we have 
$$\Delta W_{\ell} = -2\sum_{\omega_x\omega_y=1} W_{\ell\times^1_{x,y}}W_{\ell\times^2_{x,y}} + \sum_{\omega_x\omega_y = -1} 2W_{\ell\times^1_{x,y}}W_{\ell\times^2_{x,y}} - \qty(2mN - \frac{2\eta t^2}{N})W_\ell$$

\end{proof}
\subsection{Master loop equation for $U(N)$ and $SU(N)$}
\begin{proof}[Proof of \cref{thm2}]
Let $(\ell_1, \dots, \ell_n)$ be a sequence of loops. By Laplacian integration by parts, 
\begin{align*}
&-\mathbb{E}\qty[(\Delta_e W_{\ell_1})W_{\ell_2}\dots W_{\ell_n}] = Z^{-1}\int_{G^{E^+_\Lambda}} \ev{\grad W_{\ell_1}, \grad \qty(W_{\ell_2}\dots W_{\ell_n}\exp(\beta N \sum_{p\in \mathcal{P}^+_\Lambda} \Re \Tr W_p))}d\mu\\ 
&=\sum_{i=2}^n\mathbb{E}\qty[\ev{\grad W_{\ell_1},\grad W_{\ell_i}}\prod_{j\neq 1, i} W_{\ell_i} ] + \beta N\sum_{i=2}^n \sum_{p\in\mathcal{P}^+(e)}\mathbb{E}\qty[\ev{\grad W_{\ell_1}, \Re \grad W_p} W_{\ell_2}\dots W_{\ell_n}]
\intertext{Applying \cref{gradiplemma} and \cref{gradmeasure},}
&= \sum_{i=2}^n \sum_{x\in C_1, y\in C_i, \omega_x \omega_y = -1}2\mathbb{E}\qty[W_{\ell_1\ominus_{x,y}\ell_i}\prod_{j\neq i, 1}W_{\ell_j}] - \sum_{i=2}^n \sum_{x\in C_1, y\in C_i, \omega_x \omega_y = 1} 2\mathbb{E}\qty[W_{\ell_1\oplus_{x,y}\ell_i}\prod_{j\neq i, 1}W_{\ell_j}]\\
&+\eta \sum_{i=2}^n \frac{2t_1 t_i}{N}\mathbb{E}\qty[W_{\ell_1}\dots W_{\ell_n}]+ \beta N\sum_{i=2}^n \sum_{p\in\mathcal{P}^+(e), x\in C_1}\mathbb{E}\qty[W_{\ell_1\ominus_x p} W_{\ell_2}\dots W_{\ell_n}] 
\\&-\beta N\sum_{i=2}^n \sum_{p\in\mathcal{P}^+(e), x\in C_1}\mathbb{E}\qty[W_{\ell_1\oplus_x p} W_{\ell_2}\dots W_{\ell_n}] +\eta \beta N\sum_{p\in\mathcal{P}_\Lambda^+(e)}\frac{t_1}{N} \mathbb{E}[W_{\ell_1}W_{p^{-1}}W_{\ell_2}\dots W_{\ell_n}]\\  &-\eta \beta N \sum_{p\in\mathcal{P}_\Lambda^+(e)} \frac{t_1 }{N}\mathbb{E}[W_{\ell_1}W_{p}W_{\ell_2}\dots W_{\ell_n}]\\
\intertext{Regrouping the plaquettes in the expansion terms,}
&= \sum_{i=2}^n \sum_{x\in C_1, y\in C_i, \omega_x \omega_y = -1}2\mathbb{E}\qty[W_{\ell_1\ominus_{x,y}\ell_i}\prod_{j\neq i, 1}W_{\ell_j}] - \sum_{i=2}^n \sum_{x\in C_1, y\in C_i, \omega_x \omega_y = 1} 2\mathbb{E}\qty[W_{\ell_1\oplus_{x,y}\ell_i}\prod_{j\neq i, 1}W_{\ell_j}]\\
&+\eta \sum_{i=2}^n \frac{2t_1 t_i}{N}\mathbb{E}\qty[W_{\ell_1}\dots W_{\ell_n}]+ \beta N\sum_{i=2}^n \sum_{p\in\mathcal{P}^+(e), x\in C_1}\mathbb{E}\qty[W_{\ell_1\ominus_x p} W_{\ell_2}\dots W_{\ell_n}] 
\\&-\beta N\sum_{i=2}^n \sum_{p\in\mathcal{P}^+(e), x\in C_1}\mathbb{E}\qty[W_{\ell_1\oplus_x p} W_{\ell_2}\dots W_{\ell_n}] -\eta \beta N\sum_{p\in\mathcal{P}_\Lambda(e)}\frac{t_1t_p}{N} \mathbb{E}[W_{\ell_1}W_{p^{-1}}W_{\ell_2}\dots W_{\ell_n}]\\
\end{align*}
Now for the left hand side, \cref{sunlaplace} gives 
\begin{align*}
&-\mathbb{E}\qty[\qty(\Delta_e W_{\ell_1})W_{\ell_2}\dots W_{\ell_n}] = \sum_{x\neq y\in C_1, \omega_x \omega_y = 1}2\mathbb{E}\qty[W_{\times^1_{x,y}\ell_1}W_{\times^2_{x,y}\ell_1}W_{\ell_2}\dots W_{\ell_n}] \\ &- \sum_{x,y\in C_1, \omega_x \omega_y = -1}2\mathbb{E}\qty[W_{\times^1_{x,y}\ell_1}W_{\times^2_{x,y}\ell_1}W_{\ell_2}\dots W_{\ell_n}] +\qty(2mN - \frac{2\eta t_1^2}{N})\mathbb{E}\qty[W_{\ell_1}\dots W_{\ell_n}]
\end{align*}
Setting the two sides equal,rearranging terms, and dividing by $2$ gives
\begin{align*}
&\qty(mN - \frac{\eta t_1 t}{N})\mathbb{E}\qty[W_{\ell_1}\dots W_{\ell_n}]= \sum_{x,y\in C_1, \omega_x\omega_y = -1}\mathbb{E}\qty[W_{\times^1_{x,y}\ell_1}W_{\times^2_{x,y}\ell_1}W_{\ell_2}\dots W_{\ell)n}] \\ &- \sum_{x\neq y \in C_1, \omega_x \omega_y = 1}\mathbb{E}\qty[W_{\times^1_{x,y}\ell_1}W_{\times^2_{x,y}\ell_1}W_{\ell_2}\dots W_{\ell_n}] +\sum_{i=2}^n \sum_{x\in C_1, y\in C_i, \omega_x \omega_y = -1}\mathbb{E}\qty[W_{\ell_1\ominus_{x,y}\ell_i}\prod_{j\neq i, 1}W_{\ell_j}] \\ &- \sum_{i=2}^n \sum_{x\in C_1, y\in C_i, \omega_x \omega_y = 1} \mathbb{E}\qty[W_{\ell_1\oplus_{x,y}\ell_i}\prod_{j\neq i, 1}W_{\ell_j}]
+ \frac{\beta N}{2}\sum_{i=2}^n \sum_{p\in\mathcal{P}^+(e), x\in C_1}\mathbb{E}\qty[W_{\ell_1\ominus_x p} W_{\ell_2}\dots W_{\ell_n}] 
\\&-\frac{\beta N}{2}\sum_{i=2}^n \sum_{p\in\mathcal{P}^+(e), x\in C_1}\mathbb{E}\qty[W_{\ell_1\oplus_x p} W_{\ell_2}\dots W_{\ell_n}] -\eta \beta \sum_{p\in\mathcal{P}_\Lambda(e)}t_1t_p \mathbb{E}[W_{\ell_1}W_{p^{-1}}W_{\ell_2}\dots W_{\ell_n}]\\
\end{align*}
\end{proof}

\bibliographystyle{alpha}
\bibliography{MasterLoop}

\begin{thebibliography}{PPSY23}

\bibitem[Cha19a]{Ch19b}
Sourav Chatterjee.
\newblock Gauge-string duality in lattice gauge theories, Jan 2019.

\bibitem[Cha19b]{Ch19a}
Sourav Chatterjee.
\newblock Rigorous solution of strongly coupled {$\mathrm{SO}(N)$} lattice
  gauge theory in the large {$N$} limit.
\newblock {\em Communications in Mathematical Physics}, 366(1):203--268, 2019.

\bibitem[Cha19c]{Ch18}
Sourav Chatterjee.
\newblock Yang--mills for probabilists.
\newblock In {\em Probability and Analysis in Interacting Physical Systems},
  pages 1--16, Cham, 2019. Springer International Publishing.

\bibitem[CPS23]{CaPaSh23}
Sky Cao, Minjae Park, and Scott Sheffield.
\newblock Random surfaces and lattice yang-mills.
\newblock {\em arxiv preprint arXiv:2307.06790}, 2023.

\bibitem[Hsu06]{Hsu06}
Elton Hsu.
\newblock {\em Stochastic analysis on manifolds}.
\newblock American Mathematical Society, 2006.

\bibitem[Jaf16]{J16}
Jafar Jafarov.
\newblock Wilson loop expectations in {$\mathrm{SU}(N)$} lattice gauge theory.
\newblock {\em arxiv preprint arXiv:1610.03821}, 2016.

\bibitem[KS99]{KaSh99}
Ioannis Karatzas and Steven~E. Shreve.
\newblock {\em Brownian motion and stochastic calculus}.
\newblock Springer, 1999.

\bibitem[MM79]{MaMi79}
Yu.~M. Makeenko and Alexander~A. Migdal.
\newblock {Exact Equation for the Loop Average in Multicolor QCD}.
\newblock {\em Phys. Lett. B}, 88:135, 1979.
\newblock [Erratum: Phys.Lett.B 89, 437 (1980)].

\bibitem[PPSY23]{PaPfShYu23}
Minjae Park, Joshua Pfeffer, Scott Sheffield, and Pu~Yu.
\newblock Wilson loop expectations as sums over surfaces on the plane.
\newblock {\em arxiv preprint arXiv:2305.02306}, 2023.

\bibitem[SSZ22]{SheSmZh22}
Hao Shen, Scott~A. Smith, and Rongchan Zhu.
\newblock A new derivation of the finite $n$ master loop equation for lattice
  yang-mills.
\newblock {\em arxiv preprint arXiv:2202.00880}, 2022.

\end{thebibliography}

\Addresses

\newpage

\appendix
\section{Comments on Other Approaches}

In this appendix we compare our approach to the approaches of \cite{Ch19a} and \cite{SheSmZh22}.

\subsection{Integration by parts and exchangeable pairs}\label{IBP section}

In this section we first show that the Schwinger Dyson equation in \cite{Ch19a} is simply integration by parts on $SO(N)$ written in extrinsic coordinates by obtaining integration by parts from Stein's method of exchangeable pairs. We then generalize this to all compact Lie groups with Riemannian structure inducing the Haar measure.

Recall
\begin{definition}
    A pair of random variables $(U,U')$ is called an \textit{exchangeable pair} if $(U,U')=(U',U)$ in distribution.
\end{definition}

We have the following elementary lemma for exchangeable pairs,
\begin{lemma}
    Let $(U,U')$ be a exchangeable pair and $f,g$ real valued Borel measurable functions. Then
    \begin{equation}
        \mathbb{E}[(f(U')-f(U))g(U)]=-\frac{1}{2}\mathbb{E}[(f(U')-f(U))(g(U')-g(U))]
    \end{equation}
    \label{Stein Lemma}
\end{lemma}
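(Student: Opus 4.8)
The plan is to prove the identity by directly expanding the right-hand side and exploiting the definitional symmetry of the exchangeable pair. First I would record the (harmless) integrability hypothesis: since the statement involves only the random variables $f(U),f(U'),g(U),g(U')$, it is enough to assume these are integrable, which in the intended application is automatic because $f$ and $g$ will be smooth functions on a compact Lie group, hence bounded. This ensures every expectation appearing below is finite and the algebraic rearrangements are legitimate.

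Next I would expand the product in the right-hand side,
\[
(f(U')-f(U))(g(U')-g(U)) = f(U')g(U') - f(U')g(U) - f(U)g(U') + f(U)g(U),
\]
take expectations term by term, and invoke exchangeability. Since $(U,U')$ and $(U',U)$ have the same distribution, for any Borel $h$ with $h(U,U')$ integrable one has $\mathbb{E}[h(U,U')]=\mathbb{E}[h(U',U)]$. Applying this with $h(u,u')=f(u')g(u')$ gives $\mathbb{E}[f(U')g(U')]=\mathbb{E}[f(U)g(U)]$, and with $h(u,u')=f(u')g(u)$ gives $\mathbb{E}[f(U')g(U)]=\mathbb{E}[f(U)g(U')]$.

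Substituting these two equalities collapses the four-term expansion to $2\mathbb{E}[f(U)g(U)]-2\mathbb{E}[f(U')g(U)]$, whence
\[
-\tfrac{1}{2}\mathbb{E}[(f(U')-f(U))(g(U')-g(U))] = \mathbb{E}[f(U')g(U)] - \mathbb{E}[f(U)g(U)] = \mathbb{E}[(f(U')-f(U))g(U)],
\]
which is the assertion. There is essentially no obstacle: the only point requiring mild care is the integrability of the cross term $f(U')g(U)$, which again follows from boundedness in the application, so the proof reduces to a single symmetrization step.
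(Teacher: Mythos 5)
Your proof is correct: expanding the product, applying exchangeability to get $\mathbb{E}[f(U')g(U')]=\mathbb{E}[f(U)g(U)]$ and $\mathbb{E}[f(U)g(U')]=\mathbb{E}[f(U')g(U)]$, and collapsing is exactly the standard symmetrization argument; the paper itself gives no proof but simply cites Lemma 6.1 of Chatterjee, where the same computation appears. Your remark that one should add an integrability hypothesis (automatic in the application, where $f,g$ are continuous on a compact group) is a fair point, since the lemma as stated for arbitrary Borel $f,g$ would not otherwise make sense.
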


\begin{proof}
    See Lemma 6.1 from \cite{Ch19a}
\end{proof}


 Recall that $SO(N)$ has a Riemannian metric given by the inner product $\langle A,B \rangle=\frac{1}{2}\mathrm{Tr}(A^T B)$ on $\mathfrak{so}(N)$. Thus $\mathfrak{so}(N)$ has an orthonormal basis $=e_{i,j}-e_{j,i}=\frac{d}{d\epsilon}|_{\epsilon=0} (\sqrt{1-\epsilon^2}e_{i,i}+\epsilon e_{i,j}-\epsilon e_{j,i}+\sqrt{1-\epsilon^2})e_{j,j}$ for $1 \leq i<j \leq N$. These basis vectors are simply the tangent vectors generated by rotations in the plane spanned by two basis vectors.\\

So working with the rotations $R_{i,j}(\epsilon):=(\sqrt{1-\epsilon^2}e_{i,i}+\epsilon e_{i,j}-\epsilon e_{j,i}+\sqrt{1-\epsilon^2}e_{j,j}$ at the identity, it follows that the Laplace Beltrami operator of of $f \in C^2(SO(N))$ is simply given by $\Delta f(O) = \sum_{1 \leq i<j \leq N} \frac{d^2}{d\epsilon^2}|_{\epsilon=0} f(R_{i,j}(\epsilon) O)$

So
\begin{equation}
    \begin{split}
        &\Delta f= \sum_{1\leq i <j \leq N}\lim_{\epsilon \downarrow 0} \frac{f(R_{i,j}(\epsilon) O)+f(R_{i,j}(-\epsilon) O)-2f(O)}{\epsilon^2}\\
        &= 2N(N-1)\lim_{\epsilon \downarrow 0} \frac{1}{\epsilon^2}\mathbb{E}_{(I,J),\eta}[ f(R_{I,J}(\eta \epsilon) O)-f(O)]
    \end{split}
\end{equation}
where the expectation is taken over $\eta$ uniformly chosen from $\{\pm 1\}$ and $(I,J)$ uniformly from $\{1\leq i < j \leq N\}$.
Hence if $g:SO(N) \to \mathbb{R}$ is another smooth function we have
\begin{equation}
    \begin{split}
        \int_{SO(N)} \Delta f g  = 2N(N-1)\lim_{\epsilon \downarrow 0} \frac{1}{\epsilon^2}\mathbb{E}[ (f(R_{I,J}(\eta \epsilon) O)-f(O)) g(O)]
    \end{split}
    \label{Laplacian term}
\end{equation}
where the expectation is now taken with respect to the Haar measure on $SO(N)$, $(I,J)$ and $\eta$. Since $SO(N)$ is compact and $f,g \in C^2$, the interchange of limit and expectation above follows from the bounded convergence theorem.\\

Similarly for $f, g \in C^2(SO(N))$
\begin{equation}
    \begin{split}
        \int_{SO(N)} \langle \nabla f, \nabla g \rangle &= \int_{SO(N)}\sum_{1\leq i<j\leq N} \frac{d}{d\epsilon}|_{\epsilon=0} f(R_{i,j}(\epsilon)O)\frac{d}{d\epsilon}|_{\epsilon=0}  g(R_{i,j}(\epsilon)O) dO\\
       &=\lim_{\epsilon \downarrow 0} \frac{1}{\epsilon^2}\int_{SO(N)}\sum_{1\leq i<j\leq N} \frac{1}{2}[(f(R_{i,j}(\epsilon)O)-f(O))(g(R_{i,j}(\epsilon)O)-g(O))\\
       &+(f(R_{i,j}(-\epsilon)O)-f(O))(g(R_{i,j}(-\epsilon)O)-g(O))]dO\\
       &=  N(N-1)\lim_{\epsilon \downarrow 0} \frac{1}{\epsilon^2}\mathbb{E}[(f(R_{I,J}(\eta \epsilon)O)-f(O))(g(R_{I,J}(\eta \epsilon)O)-g(O))]
    \end{split}
    \label{grad inner product}
\end{equation}

Finally in \cite{Ch19a}, it is shown that $(O,R_{i,j}(\eta \epsilon)O)$ form an exchangeable pair, thus combining \eqref{Laplacian term} and \eqref{grad inner product}, integration by parts on $SO(N)$, $\int_{SO(N)} \Delta f g=-\int_{SO(N)} \langle \nabla f, \nabla g \rangle$
 follows as consequence of lemma \ref{Stein Lemma}. Moreover since lemma \ref{Stein Lemma} with this Stein pair is Chatterjee's starting point for deriving his Schwinger-Dyson equation (Theorem 7.1 \cite{Ch19a}), we see that it is simply integration by parts written in extrinsic coordinates.\\

This derivation of integration by parts actually generalizes to any compact Lie group with Riemannian metric inducing the Haar measure. For such a Lie group $G$, let $\{e_i\}_{i=1}^{d}$ be an orthonormal basis for the corresponding Lie algebra $\mathfrak{g}$. We can write $e_i = \frac{d}{dt}|_{t=0} g_i(t)$ where $g_i(t):=\exp(te_i)\in G$. Now let $g$ be an element of $G$ selected from the Haar measure, and $g_{\epsilon}:=  g_{u([d])}(\eta \epsilon) g$ where $\eta$ is Bernoulli with $\mathbb{P}(\eta=1)=\mathbb{P}(\eta=-1)=\frac{1}{2}$ and $u([d])$ is uniform on $[d]=\{1,...,d\}$ all chose independently of each other.

\begin{lemma}
    $(g,g_{\epsilon})$ is an exchangeable pair
\end{lemma}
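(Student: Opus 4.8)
The plan is to show that $(g, g_\epsilon)$ has the same joint distribution as $(g_\epsilon, g)$, which by symmetry in $\eta$ reduces to showing that $g_\epsilon = g_{u([d])}(\eta\epsilon)\, g$ is itself Haar distributed, together with a conditioning argument to upgrade this to joint exchangeability. First I would observe that it suffices to prove that, conditionally on the pair $(i,\eta)$ with $i = u([d])$ fixed and $\eta$ fixed, the map $g \mapsto g_i(\eta\epsilon)\, g$ pushes the Haar measure forward to itself; this is immediate because left translation by any fixed element $g_i(\eta\epsilon) \in G$ preserves the (left-invariant) Haar measure. Hence $g_\epsilon \sim \mathrm{Haar}$, and more is true: the conditional law of $g$ given $(i,\eta)$ equals $\mathrm{Haar}$, and the conditional law of $g_\epsilon$ given $(i,\eta)$ also equals $\mathrm{Haar}$.

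The key step is then to handle the joint law rather than just the marginals. I would compute, for bounded measurable $F$ on $G \times G$,
\begin{align*}
\mathbb{E}[F(g, g_\epsilon)] &= \frac{1}{2d}\sum_{i=1}^{d}\sum_{\eta \in \{\pm 1\}} \int_G F\bigl(g,\; g_i(\eta\epsilon)\,g\bigr)\, dg.
\end{align*}
In each summand perform the substitution $h = g_i(\eta\epsilon)\, g$, so $g = g_i(-\eta\epsilon)\, h = g_i(\eta'\epsilon)\, h$ with $\eta' = -\eta$, and $dg = dh$ by left-invariance. This turns the summand into $\int_G F\bigl(g_i(\eta'\epsilon)\, h,\; h\bigr)\, dh$. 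Re-indexing the sum over $\eta'$ (which ranges over $\{\pm 1\}$ as $\eta$ does) gives exactly $\mathbb{E}[F(g_\epsilon, g)]$, which is the desired exchangeability.

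The main obstacle — really the only subtlety — is making sure the bijection $g \mapsto g_i(\eta\epsilon)g$ is measure-preserving on the nose and that $g_i(-\eta\epsilon) = g_i(\eta\epsilon)^{-1}$, i.e. that $\exp(-te_i) = \exp(te_i)^{-1}$; this is a standard property of the one-parameter subgroup $t \mapsto \exp(te_i)$. One should also note that nothing here requires $G$ to be one of $SO(N), SU(N), U(N)$ or the metric to be bi-invariant — left-invariance of Haar measure (which holds for compact $G$, where Haar is bi-invariant anyway) is all that is used — so the lemma holds in the stated generality. I would close by remarking that this is the abstract, coordinate-free version of the exchangeable-pair construction $(O, R_{i,j}(\eta\epsilon)O)$ used in \cite{Ch19a}, with the rotations $R_{i,j}$ replaced by the one-parameter subgroups $g_i$.
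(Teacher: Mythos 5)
Your proof is correct and follows essentially the same route as the paper: left-invariance of the Haar measure makes $g_\epsilon$ Haar distributed, and the identity $g = g_{u([d])}(-\eta\epsilon)\,g_\epsilon$ together with $-\eta \stackrel{d}{=} \eta$ gives exchangeability. Your test-function computation with the substitution $h = g_i(\eta\epsilon)g$ is just a more explicit rendering of that same argument at the level of the joint law.
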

\begin{proof}
    By left invariance of the Haar measure, $g_{\epsilon}$ is also distributed according to the Haar measure. Moreover $g=g_{u([d])}(-\eta \epsilon) g_{\epsilon}$
    so $(g,g_{\epsilon})=(g_{\epsilon},g)$ in distribution since $-\eta=\eta$ in distribution.
\end{proof}

Now applying Lemma $\ref{Stein Lemma}$ to this Stein pair with $f,g \in C^2(G)$, we obtain integration by parts on $G$ by an identical calculation to that in equation \eqref{grad inner product} and \eqref{Laplacian term} for the $SO(N)$ case.

\subsection{Symmetrized Master Loop Equation and Langevin Dynamics}\label{Langevin section}

One immediate corollary of Theorem \ref{thm1} and Theorem \ref{thm2} is the symmetrized master loop equation (Theorem 1 \cite{SheSmZh22}).

This corollary is derived in \cite{SheSmZh22} through studying the following Langevin dynamics with invariant Yang-Mill's measure.

\begin{equation}
\label{Langevin}
    dQ = \frac{1}{2}\nabla S(Q)dt+d\mathfrak{B}
\end{equation}
where $S(Q):= N\beta \sum_{p \in \mathcal{P}} \mathrm{Re}(\mathrm{Tr}(Q_p))$ is the Yang-Mills action, and $\nabla$ is the intrinsic gradient on $G^{\Lambda}$. 

It can then be shown via integration by parts that the Yang-Mills measure is invariant under the dynamics \eqref{Langevin}, (Lemma 3.3 \cite{SheSmZh22}).\\

The rest of the proof proceeds by applying Itô's formula in $(\mathbb{R}^{N^2})^{\Lambda}$ to $f(Q):= W_{\ell_1}...W_{\ell_n}$ with $Q$ evolving according to $\eqref{Langevin}$ starting with the Yang-Mills measure as the initial distribution, and taking the expectation of both sides with respect the Yang-Mills measure. \\

$f(Q_t)$ is of course a semimartingale \cite{KaSh99}, and can thus be written in the form $f(Q_t)=f(Q_0)+M_t+A_t$ where $M_t$ is a martingale, and $A_t$ is a bounded variation process with $M_0=A_0=0$. The equation \cite{SheSmZh22} obtain is $\mathbb{E}_{\mathrm{YM}}[A_t]=0$. But standard theory for stochastic analysis on manifolds (Chapter 3, \cite{Hsu06}) tells us that the Langevin dynamics $\eqref{Langevin}$ has infinitesimal generator $\mathcal{L}=\frac{1}{2}\Delta + \frac{1}{2}\langle \nabla S , \nabla \rangle $, so $A_t = \int_{0}^{t} \mathcal{L} f(Q_s) ds $. Thus the symmetrized master loop equation derived in \cite{SheSmZh22} is equivalent to the following integration by parts

\begin{equation}
   \int_{G^{\Lambda}}  \Delta f(Q)\exp(S(Q)) + \langle \nabla f(Q), \nabla \exp(S(Q)) \rangle dQ=0
\end{equation}

where $\nabla$ and $\Delta$ are the intrinsic gradient and Laplace-Beltrami operator on $G^{\Lambda}$ respectively.

\section{Deriving the extrinsic integration by parts formula} \label{extrinsic derivation}
Again let $SO(N)\subset\mathbb{R}^{N^2}$ in the natural way.
Let $U\supset SO(N)$ be open, and $f,g$ smooth functions on $U$. Then we can extend the integration by parts formula to functions $f,g\in C^2(U)$.
Let $X_q = q$ be a vector field on $\mathbb{R}^{N^2}$. Note that for consistency we're equipping $\mathbb{R}^{N^2}$ with the metric $\ev{X,Y} = \frac{1}{2}\Tr(X^TY)$. Thus, 
the gradient for this metric is twice the usual gradient.
\newline 

Recall that
$$X_{ij}(q) = qe_i \wedge e_j^T = \sum_{a}q_{ai}e_{a}e_j^T - q_{aj}e_a e_i^T$$
so (because $X_{ij}$ is an orthonormal frame on $SO(N)$)
$$X_{ij}(f) = \sum_{a} q_{ai}\pdv{f}{q_{aj}} - q_{aj}\pdv{f}{q_{ai}}$$
And so $$\ev{\grad_G f, \grad_G g} = \sum_{a,b,i<j}\qty(q_{ai}\pdv{f}{q_{aj}} - q_{aj}\pdv{f}{q_{ai}})\qty(q_{bi}\pdv{g}{q_{bj}} - q_{bj}\pdv{g}{q_{bi}})$$
Because this quantity is symmetric in $i$ and $j$, 
$$ =\frac{1}{2}\sum_{a,b, i\neq j} \qty(q_{ai}\pdv{f}{q_{aj}} - q_{aj}\pdv{f}{q_{ai}})\qty(q_{bi}\pdv{g}{q_{bj}} - q_{bj}\pdv{g}{q_{bi}})$$

$$ =\frac{1}{2}\sum_{a,b, i\neq j} \qty(q_{ai}\pdv{f}{q_{aj}} - q_{aj}\pdv{f}{q_{ai}})\qty(q_{bi}\pdv{g}{q_{bj}} - q_{bj}\pdv{g}{q_{bi}})$$
$$ =\sum_{a,b,i\neq j} q_{ai}q_{bi}\pdv{f}{q_{aj}}\pdv{g}{q_{bj}} - q_{ai} q_{bj}\pdv{f}{q_{aj}}\pdv{g}{q_{bi}}$$
We can simplify the form of the sums by noting that $i=j$ terms vanish anyway:
$$ =\sum_{a,b,i,j} q_{ai}q_{bi}\pdv{f}{q_{aj}}\pdv{g}{q_{bj}} - q_{ai} q_{bj}\pdv{f}{q_{aj}}\pdv{g}{q_{bi}}$$

Then, by orthogonality the first term simplifies further 
$$\sum_{a,b,j}\delta_{ab}\pdv{f}{q_{aj}}\pdv{g}{q_{bj}} - \sum_{a,b,i,j} q_{ai}q_{bj} \pdv{f}{q_{aj}}\pdv{g}{q_{bi}}$$
$$ =\sum_{a,j}\pdv{f}{q_{aj}}\pdv{g}{q_{aj}} - \sum_{a,b,i,j}q_{ai}q_{bj}\pdv{f}{q_{aj}}\pdv{g}{q_{bi}}$$

Next, 
$$\Delta_G f = \sum_{i<j} X_{ij}(\sum_{a} q_{ai} \pdv{f}{q_{aj}} - q_{aj}\pdv{f}{q_{ai}})$$

$$X_{ij}\qty(q_{ai}\pdv{f}{q_{aj}}) = \sum_{b} \qty(q_{bi}\pdv{\qty(q_{ai} \pdv{f}{q_{aj}})}{q_{bj}} - q_{bj}\pdv{\qty(q_{ai}\pdv{f}{q_{aj}})}{q_{bi}})$$
$$ =\sum_b \qty(q_{bi} \delta_{ab}\delta_{ij} \pdv{f}{q_{aj}} + q_{bi}q_{ai} \pdv[2]{f}{q_{bj}}{q_{aj}} - q_{bj}\delta_{ab} \pdv{f}{q_{aj}} - q_{bj}q_{ai}\pdv[2]{f}{q_{bi}}{q_{aj}})$$
$$ = q_{ai}\delta_{ij} \pdv{f}{q_{aj}} - q_{aj}\pdv{f}{q_{aj}} + \sum_{b}\qty(q_{bi}q_{ai}\pdv[2]{f}{q_{bj}}{q_{aj}} - q_{bj}q_{ai}\pdv[2]{f}{q_{bi}}{q_{aj}})$$
Noting that $i<j$
$$= - q_{aj}\pdv{f}{q_{aj}} + \sum_{b}\qty(q_{bi}q_{ai}\pdv[2]{f}{q_{bj}}{q_{aj}} - q_{bj}q_{ai}\pdv[2]{f}{q_{bi}}{q_{aj}})$$
Similarly, 
$$X_{ij}\qty(q_{aj}\pdv{f}{q_{ai}}) = \sum_{b} \qty(q_{bi}\pdv{\qty(q_{aj} \pdv{f}{q_{ai}})}{q_{bj}} - q_{bj}\pdv{\qty(q_{aj}\pdv{f}{q_{ai}})}{q_{bi}})$$
$$ = \sum_{b}\qty(q_{bi}\delta_{ab} \pdv{f}{q_{ai}} + q_{bi}q_{aj}\pdv[2]{f}{q_{bj}}{q_{ai}} - q_{bj}\delta_{ab}\delta_{ij} \pdv{f}{q_{ai}} - q_{bj}q_{aj}\pdv[2]{f}{q_{bi}}{q_{ai}})$$
$$=q_{ai}\pdv{f}{q_{ai}} -q_{aj}\delta_{ij} \pdv{f}{q_{ai}} + \sum_{b} \qty(q_{bi}q_{aj}\pdv[2]{f}{q_{bj}}{q_{ai}} - q_{bj}q_{aj}\pdv[2]{f}{q_{bi}}{q_{ai}})$$
Again using $i<j$
$$=q_{ai}\pdv{f}{q_{ai}}  + \sum_{b} \qty(q_{bi}q_{aj}\pdv[2]{f}{q_{bj}}{q_{ai}} - q_{bj}q_{aj}\pdv[2]{f}{q_{bi}}{q_{ai}})$$
Subtracting the two, 
$$\Delta_G f = -\sum_{a, i<j}\qty(q_{aj}\pdv{f}{q_{aj}} +q_{ai}\pdv{f}{q_{ai}}) $$ $$+ \sum_{a,b,i<j}\qty[\qty(q_{bi}q_{ai}\pdv[2]{f}{q_{bj}}{q_{aj}} - q_{bj}q_{ai}\pdv[2]{f}{q_{bi}}{q_{aj}}) - \qty(q_{bi}q_{aj}\pdv[2]{f}{q_{bj}}{q_{ai}} - q_{bj}q_{aj}\pdv[2]{f}{q_{bi}}{q_{ai}})]$$

For the first term, by the symmetry of the summand in $i,j$, 
$$ =-\frac{1}{2}\sum_{a, i\neq j}\qty(q_{aj}\pdv{f}{q_{aj}} + q_{ai}\pdv{f}{q_{ai}}) = -\sum_{a,i\neq j} q_{aj}\pdv{f}{q_{aj}} = -(N-1)\sum_{a, j}q_{aj}\pdv{f}{q_{aj}}$$
For the second term, it can be regrouped into 
$$\sum_{i<j} \sum_{a,b} \qty[\qty(q_{bi}q_{ai}\pdv[2]{f}{q_{bj}}{q_{aj}} + q_{bj} q_{aj}\pdv[2]{f}{q_{bi}}{q_{ai}}) - \qty(q_{bj}q_{ai}\pdv[2]{f}{q_{bi}}{q_{aj}} + q_{bi}q_{aj}\pdv[2]{f}{q_{bj}}{q_{ai}})]$$
$$ = \sum_{a,b, i\neq j}\qty(q_{bi}q_{ai}\pdv[2]{f}{q_{bj}}{q_{aj}} - q_{bj}q_{ai}\pdv[2]{f}{q_{bi}}{q_{aj}})$$
For the first term,
$$ = \sum_{a,b}   \qty(\sum_{i}q_{bi}q_{ai} \sum_{j\neq i} \pdv[2]{f}{q_{bj}}{q_{aj}})$$
This simplifies to 
$$\sum_{a,b} \sum_{i}q_{bi}q_{ai} \sum_j \pdv[2]{f}{q_{bj}}{q_{aj}} - \sum_{a,b} \sum_i q_{ai}q_{bi}\pdv[2]{f}{q_{bi}}{q_{ai}}$$
By orthogonality, 
$$ = \sum_{a, b}\delta_{ab}\sum_j \pdv[2]{f}{q_{bj}}{q_{aj}} - \sum_{a,b,i}\pdv[2]{f}{q_{bi}}{q_{ai}}$$
$$ = \sum_{a,j}\pdv[2]{f}{q_{aj}} - \sum_{a,b,i}q_{ai}q_{bi}\pdv[2]{f}{q_{bi}}{q_{ai}}$$
What remains is the term 
$$ -\sum_{a,b,i\neq j} q_{bj}q_{ai}\pdv[2]{f}{q_{bi}}{q_{aj}} = -\sum_{a,b,i,j} q_{bj}q_{ai}\pdv[2]{f}{q_{bi}}{q_{aj}} + \sum_{i,a,b}q_{bi}q_{ai} \pdv[2]{f}{q_{bi}}{q_{ai}}$$
This is quite nice because we can see that the unfamiliar term cancels. In summary then, 
$$\Delta_Gf = -(N-1)\sum_{a,j}q_{aj}\pdv{f}{q_{aj}}+\sum_{a,j}\pdv[2]{f}{q_{aj}} - \sum_{a,b,i,j}q_{a,b,i,j}\pdv[2]{f}{q_{bi}}{q_{aj}}$$
Now, recalling the integration by parts on $G$:
$$\int g\Delta_G f = -\int \ev{\grad_G f, \grad_G g}$$
On the LHS then, we have 
$$\int_G -(N-1)\sum_{a,j}q_{aj}g\pdv{f}{q_{aj}}+\sum_{a,j}g\pdv[2]{f}{q_{aj}} - \sum_{a,b,i,j}q_{a,b,i,j}g\pdv[2]{f}{q_{bi}}{q_{aj}}$$
And on the RHS we have 
$$ = -\int_{G} \ev{\grad_G f, \grad_G g} = -\int_G \sum_{a,j}\pdv{f}{q_{aj}}\pdv{g}{q_{aj}} - \sum_{a,b,i,j}q_{ai}q_{bj}\pdv{f}{q_{aj}}\pdv{g}{q_{bi}}$$
Rearranging,
$$(N-1)\int g\sum_{a,j}q_{aj} \pdv{f}{q_{aj}} d\mu$$ $$ = \int_G \qty[\sum_{a,j} g\pdv[2]{f}{q_{aj}} - \sum_{a,b,i,j}q_{bj}q_{ai}\pdv[2]{f}{q_{bi}}{q_{aj}} + \sum_{a,j}\pdv{f}{q_{aj}}\pdv{g}{q_{aj}} - \sum_{a,b,i,j}q_{ai}q_{bj}\pdv{f}{q_{aj}}\pdv{g}{q_{bi}}]d\mu$$

Which is precisely the Schwinger dyson equation.

\end{document}